\theoremstyle{definition}
\newtheorem{definition}{Definition}
\theoremstyle{plain}
\newtheorem{theorem}{Theorem}
\newtheorem{proposition}{Proposition}
\newtheorem{remark}{Remark}
\newtheorem{corollary}{Corollary}
\begin{document}
%
\title{Unifying notions of generalized weights for universal security on wire-tap networks}
%
%
%
\author{\IEEEauthorblockN{Umberto Mart{\'i}nez-Pe\~{n}as}
\IEEEauthorblockA{Department of Mathematical Sciences, Aalborg University, Denmark\\
Email: umberto@math.aau.dk}
		and \IEEEauthorblockN{Ryutaroh Matsumoto}
\IEEEauthorblockA{Department of Information and Communications Engineering, Tokyo Institute of Technology, Japan.}}
\markboth{Journal of \LaTeX\ Class Files,~Vol.~14, No.~8, August~2015}%
{Shell \MakeLowercase{\textit{et al.}}: Bare Demo of IEEEtran.cls for IEEE Journals}
%



\maketitle

\begin{abstract}
Universal security over a network with linear network coding has been intensively studied. However, previous linear codes used for this purpose were linear over a larger field than that used on the network. In this work, we introduce new parameters (relative dimension/rank support profile and relative generalized matrix weights) for linear codes that are linear over the field used in the network, measuring the universal security performance of these codes. The proposed new parameters enable us to use optimally universal secure linear codes on noiseless networks for all possible parameters, as opposed to previous works, and also enable us to add universal security to the recently proposed list-decodable rank-metric codes by Guruswami et al. We give several properties of the new parameters: monotonicity, Singleton-type lower and upper bounds, a duality theorem, and definitions and characterizations of equivalences of linear codes. Finally, we show that our parameters strictly extend relative dimension/length profile and relative generalized Hamming weights, respectively, and relative dimension/intersection profile and relative generalized rank weights, respectively. Moreover, we show that generalized matrix weights are larger than Delsarte generalized weights. \\
\end{abstract}


%
\IEEEpeerreviewmaketitle

\section{Introduction}
%
%
%
%
\IEEEPARstart{L}{inear} network coding was first studied in \cite{ahlswede}, \cite{cai-yeung} and \cite{linearnetwork}, and allows to realize higher throughput than the conventional storing and forwarding. In this context, security over the network, meaning information leakage to a wire-tapping adversary, was first considered in \cite{secure-network} and later in \cite{wiretapnetworks}. However, both approaches require knowing and/or modifying the underlying linear network code, which does not allow to perform, for instance, random linear network coding \cite{random}. 

The use of outer coding on the source node was proposed in \cite{feldman} to protect messages from information leakage knowing but without modifying the underlying linear network code. Later, the use of linear (block) codes as outer codes was proposed in \cite{silva-universal} to protect messages from errors together with information leakage to a wire-tapping adversary, depending only on the number of errors and wire-tapped links, and not depending on the underlying linear network code, which was there referred to as ``universal security''. In particular, optimal parameters are obtained in \cite{silva-universal} for universal security over noiseless networks for some restricted packet lengths.

This approach was further investigated in \cite{rgrw}, where relative generalized rank weights (RGRWs) and relative dimension/intersection profiles (RDIPs) were introduced to measure simultaneously the universal security performance and correction capability of pairs of linear codes, which are used for coset coding as in \cite{wyner}.

Unfortunately, the codes proposed in \cite{silva-universal} and \cite{rgrw} are linear over the finite field $ \mathbb{F}_{q^m} $, where $ m $ is the packet length, if the linear network coding is performed over the finite field $ \mathbb{F}_q $. This restricts the achievable parameters, requires performing computations over the larger field $ \mathbb{F}_{q^m} $ and leaves out important codes, such as the codes obtained in \cite{list-decodable-rank-metric}, which are the first list-decodable rank-metric codes whose list sizes are polynomial in the code length. Moreover, even though there exist maximum rank distance codes (see \cite{delsartebilinear}), and hence optimally universal secure codes for noiseless networks, that can be applied for all number of outgoing links from the source, all packet lengths and all dimensions over $ \mathbb{F}_q $, the maximum rank distance codes considered in \cite{silva-universal} and \cite{rgrw} only include Gabidulin codes \cite{gabidulin} and some reducible codes \cite{reducible}, for which the previous parameters are restricted.

In this work, we study the universal security performance of codes and coset coding schemes that are linear over the smaller field $ \mathbb{F}_q $. After some preliminaries in Section II, the new contributions of this paper are organized as follows:

In Section III, we introduce relative dimension/rank support profiles (RDRPs) and relative generalized matrix weights (RGMWs) and give their monotonicity properties. In Section IV, we prove that RDRPs and RGMWs exactly measure the worst case information leakage on networks, and then we give optimal linear coset coding schemes for noiseless networks for all possible parameters, in contrast to previous works. In Section V, we show how to add universal security to the list-decodable codes in \cite{list-decodable-rank-metric} using linear coset coding schemes and the study in the previous sections. In Section VI, we study basic properties of RDRPs and RGMWs: Upper and lower Singleton-type bounds and the duality theorem for GMWs. In Section VII, we define and study security equivalences of linear codes, and then obtain ranges of possible parameters and minimum parameters of linear codes up to these equivalences. Finally, in Section VIII, we prove that RDRPs strictly extend RDLPs \cite{forney, luo} and RDIPs \cite{rgrw}, \newgeometry{
 right=19.1mm,
 left=19.1mm,
 top=19.1mm,
 bottom=19.1mm,
 } \noindent and we prove that RGMWs strictly extend RGHWs \cite{luo, wei} and RGRWs \cite{rgrw}. We conclude by showing that GMWs are larger (strictly in some cases) than Delsarte generalized weights \cite{ravagnaniweights}.

Due to space limitations, some proofs are omitted. They can be found in the extended version \cite{unifying}.

\section{Coset coding schemes for universal security in linear network coding}

\subsection{Notation}

Let $ q $ be a prime power and $ m $ and $ n $, two positive integers. We denote by $ \mathbb{F} $ an arbitrary field and by $ \mathbb{F}_q $ the finite field with $ q $ elements. $ \mathbb{F}^n $ denotes the vector space of row vectors of length $ n $ with components in $ \mathbb{F} $, and $ \mathbb{F}^{m \times n} $ denotes the vector space of $ m \times n $ matrices with components in $ \mathbb{F} $. For a vector space $ \mathcal{V} $ over $ \mathbb{F} $ and a subset $ \mathcal{A} \subseteq \mathcal{V} $, we denote by $ \langle \mathcal{A} \rangle $ the vector space generated by $ \mathcal{A} $ over $ \mathbb{F} $, and we denote by $ \dim(\mathcal{V}) $ the dimension of $ \mathcal{V} $ over $ \mathbb{F} $. Finally, $ A^T \in \mathbb{F}^{n \times m} $ denotes the transposed of a matrix $ A \in \mathbb{F}^{m \times n} $, $ {\rm Rk}(A) $ denotes its rank, and the symbols $ + $ and $ \oplus $ denote the sum and direct sum of vector spaces, respectively.

Throughout the paper, a (block) code in $ \mathbb{F}^{m \times n} $ (respectively, in $ \mathbb{F}^n $) is a subset of $ \mathbb{F}^{m \times n} $ (respectively, of $ \mathbb{F}^n $), and it is called linear if it is a vector space.

\subsection{Linear network coding model} \label{subsec linear network model}

Consider a network with several sources and several sinks. A given source transmits a message $ \mathbf{x} \in \mathbb{F}_q^\ell $ through the network to multiple sinks. To that end, that source encodes the message as a collection of $ n $ packets of length $ m $, seen as a matrix $ C \in \mathbb{F}_q^{m \times n} $, where $ n $ is the number of outgoing links from this source. We consider linear network coding on the network, first considered in \cite{ahlswede, linearnetwork} and formally defined in \cite[Definition 1]{Koetter2003}, which allows to reach higher throughput than just storing and forwarding on the network. This means that a given sink receives a matrix of the form
\begin{equation*}
Y = CA^T \in \mathbb{F}_q^{m \times N},
\end{equation*}
where $ A \in \mathbb{F}_q^{N \times n} $ is called the transfer matrix corresponding to the considered source and sink. This matrix may be randomly chosen if random linear network coding is applied \cite{random}.

\subsection{Universal secure communication over networks} \label{subsection secure communication}

In secure or reliable network coding, two of the main problems addressed in the literature are the following:
\begin{enumerate}
\item
Error and erasure correction \cite{rgrw, on-metrics, silva-universal}: An adversary and/or a noisy channel may introduce errors on some links of the network and/or modify the transfer matrix, hence the sink receives the matrix
$$ Y = CA^{\prime T} + E \in \mathbb{F}_q^{m \times N}, $$
where $ A^\prime \in \mathbb{F}_q^{N \times n} $ is the modified transfer matrix, and $ E \in \mathbb{F}_q^{m \times N} $ is the final error matrix. We say that $ t = {\rm Rk}(E) $ errors and $ \rho = n - {\rm Rk}(A^\prime) $ erasures occurred.
\item
Information leakage \cite{secure-network, wiretapnetworks, feldman, rgrw, silva-universal}: A wire-tapping adversary listens to $ \mu > 0 $ links of the network, obtaining a matrix of the form $ CB^T \in \mathbb{F}_q^{m \times \mu} $, for some matrix $ B \in \mathbb{F}_q^{\mu \times n} $.
\end{enumerate}

Outer coding in the source node is usually applied to tackle the previous problems, and it is called ``universal secure'' \cite{silva-universal} if it provides security as in the previous items for fixed numbers of wire-tapped links $ \mu $, errors $ t $ and erasures $ \rho $, independently of the transfer matrix $ A $ used. This implies that no previous knowledge or modification of the transfer matrix is required and random linear network coding \cite{random} may be applied.

\subsection{Coset coding schemes for outer codes} \label{subsection coding schemes}

The concept of coset coding scheme was introduced in \cite{wyner} to protect messages simultaneously from errors and information leakage. We use the formal definition \cite[Definition 7]{rgrw}:

\begin{definition}[\textbf{Coset coding schemes \cite{rgrw, wyner}}]
A coset coding scheme over the field $ \mathbb{F} $ with message set $ \mathcal{S} $ is a family of disjoint nonempty subsets of $ \mathbb{F}^{m \times n} $, $ \mathcal{P}_\mathcal{S} = \{ \mathcal{C}_\mathbf{x} \}_{\mathbf{x} \in \mathcal{S}} $.

Each $ \mathbf{x} \in \mathcal{S} $ is encoded by the source by choosing uniformly at random an element $ C \in \mathcal{C}_\mathbf{x} $. 
\end{definition}

In this paper, we will consider the particular case obtained by using nested linear code pairs, introduced in \cite[Section III.A]{zamir}: 

\begin{definition}[\textbf{Nested linear code pairs \cite{zamir}}] \label{definition NLCP}
A nested linear code pair is a pair of linear codes $ \mathcal{C}_2 \subsetneqq \mathcal{C}_1 \subseteq \mathbb{F}^{m \times n} $. Choose a vector space $ \mathcal{W} $ such that $ \mathcal{C}_1 = \mathcal{C}_2 \oplus \mathcal{W} $ and a vector space isomorphism $ \psi : \mathbb{F}^\ell \longrightarrow \mathcal{W} $, where $ \ell = \dim(\mathcal{C}_1/\mathcal{C}_2) $. Then we define the sets $ \mathcal{C}_\mathbf{x} = \psi(\mathbf{x}) + \mathcal{C}_2 $, for $ \mathbf{x} \in \mathbb{F}^\ell $. 
\end{definition}

These coset coding schemes are linear in the following sense:
$$ a \mathcal{C}_\mathbf{x} + b \mathcal{C}_\mathbf{y} \subseteq \mathcal{C}_{a \mathbf{x} + b \mathbf{y}}, $$
for all $ a, b \in \mathbb{F} $ and all $ \mathbf{x}, \mathbf{y} \in \mathbb{F}^\ell $. Moreover, they are the only coset coding schemes with this linearity property (see \cite[Proposition 1]{similarities}).

\section{New parameters of linear coset coding schemes for universal security on networks}

Inspired by \cite{slides, rgrw, similarities}, we define rank supports and rank support spaces as follows: 

\begin{definition} [\textbf{Row space and rank}]
For a matrix $ C \in \mathbb{F}^{m \times n} $, we define its row space $ {\rm Row}(C) $ as the vector space in $ \mathbb{F}^n $ generated by its rows, and its rank as $ {\rm Rk}(C) = \dim({\rm Row}(C)) $.
\end{definition}

\begin{definition} [\textbf{Rank support and rank weight \cite[Definition 1]{slides}}]
Given a vector space $ \mathcal{C} \subseteq \mathbb{F}^{m \times n} $, we define its rank support as
\begin{equation*}
{\rm RSupp}(\mathcal{C}) = \sum_{C \in \mathcal{C}} {\rm Row}(C) \subseteq \mathbb{F}^n.
\end{equation*}
We also define the rank weight of the space $ \mathcal{C} $ as 
\begin{equation*}
{\rm wt_R}(\mathcal{C}) = \dim({\rm RSupp}(\mathcal{C})).
\end{equation*}
\end{definition}

Obviously, $ {\rm RSupp}(\langle \{ C \} \rangle) = {\rm Row}(C) $ and $ {\rm wt_R}(\langle \{ C \} \rangle) $ $ = {\rm Rk}(C) $, for every matrix $ C \in \mathbb{F}^{m \times n} $.

\begin{definition} [\textbf{Rank support spaces}]
Given a vector space $ \mathcal{L} \subseteq \mathbb{F}^n $, we define its rank support space $ \mathcal{V}_\mathcal{L} \subseteq \mathbb{F}^{m \times n} $ as
\begin{equation*}
\mathcal{V}_\mathcal{L} = \{ V \in \mathbb{F}^{m \times n} \mid {\rm Row}(V) \subseteq \mathcal{L} \}.
\end{equation*}
We denote by $ RS(\mathbb{F}^{m \times n}) $ the family of rank support spaces in $ \mathbb{F}^{m \times n} $.
\end{definition}

A proof of the following result can be found in \cite{unifying}:

\begin{theorem} \label{theorem characterization}
Fix a set $ \mathcal{V} \subseteq \mathbb{F}^{m \times n} $. The following are equivalent:
\begin{enumerate}
\item
$ \mathcal{V} $ is a rank support space. That is, there exists a subspace $ \mathcal{L} \subseteq \mathbb{F}^n $ such that $ \mathcal{V} = \mathcal{V}_\mathcal{L} $.
\item
$ \mathcal{V} $ is linear and has a basis of the form $ B_{i,j} $, for $ i = 1,2, \ldots, m $ and $ j = 1,2, \ldots, k $, where there are vectors $ \mathbf{b}_1, \mathbf{b}_2, \ldots, \mathbf{b}_k \in \mathbb{F}^n $ such that $ B_{i,j} $ has the vector $ \mathbf{b}_j $ in the $ i $-th row and the rest of its rows are zero vectors.
\item
There exists a matrix $ B \in \mathbb{F}^{\mu \times n} $, for some positive integer $ \mu $, such that
$$ \mathcal{V} = \{ V \in \mathbb{F}^{m \times n} \mid VB^T = 0 \}. $$
\end{enumerate}
In addition, the relation between items 1, 2 and 3 is that $ \mathbf{b}_1, \mathbf{b}_2, \ldots, \mathbf{b}_k $ are a basis of $ \mathcal{L} $, $ B $ is a (possibly not full-rank) parity check matrix of $ \mathcal{L} $ and $ \dim(\mathcal{L}) = n - {\rm Rk}(B) $. In particular,
\begin{equation}
\dim(\mathcal{V}_\mathcal{L}) = m \dim(\mathcal{L}).
\label{dimension matrix modules}
\end{equation}
\end{theorem}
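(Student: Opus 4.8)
The plan is to establish the cycle of implications $(1) \Rightarrow (2) \Rightarrow (3) \Rightarrow (1)$, recording the stated correspondence between $\mathcal{L}$, the basis vectors $\mathbf{b}_j$, and the matrix $B$ as each step is carried out, and reading off the dimension formula \eqref{dimension matrix modules} from the explicit basis produced in item $(2)$. The single computation that everything hinges on is the elementary identity $V B^T = 0 \iff {\rm Row}(V) \subseteq \mathcal{L}$ (when $B$ is a parity check matrix of $\mathcal{L}$), together with the observation that membership in $\mathcal{V}_\mathcal{L}$ is a condition imposed on each row of $V$ separately.

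For $(1) \Rightarrow (2)$, I would start from $\mathcal{V} = \mathcal{V}_\mathcal{L}$, fix a basis $\mathbf{b}_1, \mathbf{b}_2, \ldots, \mathbf{b}_k$ of $\mathcal{L}$ (so $k = \dim(\mathcal{L})$), and define $B_{i,j}$ as the matrix having $\mathbf{b}_j$ in its $i$-th row and zeros elsewhere. The decisive observation is that $V \in \mathcal{V}_\mathcal{L}$ precisely when each row of $V$ lies in $\mathcal{L}$, i.e. is a linear combination of the $\mathbf{b}_j$; writing the $i$-th row as $\sum_j c_{i,j} \mathbf{b}_j$ gives $V = \sum_{i,j} c_{i,j} B_{i,j}$, so the $B_{i,j}$ span $\mathcal{V}_\mathcal{L}$. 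Their linear independence follows row-by-row from that of the $\mathbf{b}_j$, so they form a basis; counting the $mk$ of them yields $\dim(\mathcal{V}_\mathcal{L}) = m \dim(\mathcal{L})$ at once.

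For $(2) \Rightarrow (3)$, I would set $\mathcal{L} = \langle \mathbf{b}_1, \mathbf{b}_2, \ldots, \mathbf{b}_k \rangle$, noting that the independence of the $B_{i,j}$ forces the $\mathbf{b}_j$ to be independent and hence a basis of $\mathcal{L}$ with $\dim(\mathcal{L}) = k$, and take $B$ to be any parity check matrix of $\mathcal{L}$, that is, a matrix whose rows generate $\mathcal{L}^\perp$, so that $\dim(\mathcal{L}) = n - {\rm Rk}(B)$. Here the key identity enters: the rows of $VB^T$ are exactly the products of the rows of $V$ with $B^T$, and a vector lies in $\mathcal{L}$ iff it is annihilated by $B^T$ on the right, whence $VB^T = 0 \iff {\rm Row}(V) \subseteq \mathcal{L}$. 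Combining this with the description of $\mathcal{V}$ from item $(2)$ as the set of matrices all of whose rows lie in $\mathcal{L}$ gives $\mathcal{V} = \{ V \mid VB^T = 0 \}$.

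Finally, $(3) \Rightarrow (1)$ reuses the same identity in reverse: given $B \in \mathbb{F}^{\mu \times n}$, I would define $\mathcal{L} = \{ \mathbf{v} \in \mathbb{F}^n \mid \mathbf{v} B^T = 0 \}$, a subspace of dimension $n - {\rm Rk}(B^T) = n - {\rm Rk}(B)$ for which $B$ is by construction a parity check matrix, and the identity then gives $\{ V \mid VB^T = 0 \} = \mathcal{V}_\mathcal{L}$, closing the cycle and confirming all stated relations. The computations are routine; the only point deserving genuine care is the identity $VB^T = 0 \iff {\rm Row}(V) \subseteq \mathcal{L}$, on which the equivalence of items $(1)$ and $(3)$ and all rank bookkeeping rest, and the fact that $B$ is permitted to be non-full-rank, so that $\mu$ is unconstrained and it is ${\rm Rk}(B)$, not $\mu$, that governs $\dim(\mathcal{L})$.
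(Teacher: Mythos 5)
Your proof is correct and complete: the cycle $(1)\Rightarrow(2)\Rightarrow(3)\Rightarrow(1)$, resting on the row-wise identity $VB^T = 0 \iff {\rm Row}(V) \subseteq \mathcal{L}$ and the explicit basis count giving $\dim(\mathcal{V}_\mathcal{L}) = m\dim(\mathcal{L})$, is exactly the natural argument. Note that this paper omits its own proof of the theorem (deferring to the extended version \cite{unifying}), and the proof given there proceeds in essentially this same way, so there is no substantive difference to report.
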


We conclude by studying the duality of rank support spaces. We consider the following inner product in $ \mathbb{F}^{m \times n} $:

\begin{definition} [\textbf{Hilbert-Schmidt or trace product}]
Given matrices $ C, D \in \mathbb{F}^{m \times n} $, we define its Hilbert-Schmidt product, or trace product, as
$$ \langle C, D \rangle = {\rm Trace}(C D^T) $$
$$ = \sum_{i=1}^m \mathbf{c}_i \cdot \mathbf{d}_i = \sum_{i=1}^m \sum_{j=1}^n c_{i,j} d_{i,j} \in \mathbb{F}, $$
where $ \mathbf{c}_i $ and $ \mathbf{d}_i $ are the rows of $ C $ and $ D $, respectively, and where $ c_{i,j} $ and $ d_{i,j} $ are their components, respectively. 

Given a vector space $ \mathcal{C} \subseteq \mathbb{F}^{m \times n} $, we denote by $ \mathcal{C}^\perp $ its dual:
\begin{equation*}
\mathcal{C}^\perp = \{ D \in \mathbb{F}^{m \times n} \mid \langle C, D \rangle = 0, \forall C \in \mathcal{C} \}.
\end{equation*}
\end{definition}

Since the trace product in $ \mathbb{F}^{m \times n} $ coincides with the usual inner product in $ \mathbb{F}^{mn} $, it holds that
$$ \dim(\mathcal{C}^\perp) = mn - \dim(\mathcal{C}), \quad \mathcal{C} \subseteq \mathcal{D} \Longleftrightarrow \mathcal{D}^\perp \subseteq \mathcal{C}^\perp, $$
$$ \mathcal{C}^{\perp \perp} = \mathcal{C}, \quad (\mathcal{C} + \mathcal{D})^\perp = \mathcal{C}^\perp \cap \mathcal{D}^\perp, \quad (\mathcal{C} \cap \mathcal{D})^\perp = \mathcal{C}^\perp + \mathcal{D}^\perp, $$
for linear codes $ \mathcal{C}, \mathcal{D} \subseteq \mathbb{F}^{m \times n} $. We also have the following:

\begin{proposition} \label{dual rank support}
If $ \mathcal{V} \in RS(\mathbb{F}^{m \times n}) $, then $ \mathcal{V}^\perp \in RS(\mathbb{F}^{m \times n}) $. More concretely, for any subspace $ \mathcal{L} \subseteq \mathbb{F}^n $, it holds that
$$ (\mathcal{V}_\mathcal{L})^\perp = \mathcal{V}_{(\mathcal{L}^\perp)}. $$
\end{proposition}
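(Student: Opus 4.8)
The plan is to first reduce the general statement to the concrete identity, and then prove the identity $(\mathcal{V}_\mathcal{L})^\perp = \mathcal{V}_{(\mathcal{L}^\perp)}$ by one inclusion together with a dimension count. Since $\mathcal{V} \in RS(\mathbb{F}^{m \times n})$ means, by definition, that $\mathcal{V} = \mathcal{V}_\mathcal{L}$ for some subspace $\mathcal{L} \subseteq \mathbb{F}^n$, once the identity is established we immediately obtain $\mathcal{V}^\perp = (\mathcal{V}_\mathcal{L})^\perp = \mathcal{V}_{(\mathcal{L}^\perp)} \in RS(\mathbb{F}^{m \times n})$, which settles the first assertion. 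Thus everything reduces to the displayed equality.

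The key observation I would exploit is that the trace product splits row by row, $\langle C, D \rangle = \sum_{i=1}^m \mathbf{c}_i \cdot \mathbf{d}_i$, a sum of ordinary inner products in $\mathbb{F}^n$ in which the $i$-th row of $C$ is paired \emph{only} with the $i$-th row of $D$. First I would prove the inclusion $\mathcal{V}_{(\mathcal{L}^\perp)} \subseteq (\mathcal{V}_\mathcal{L})^\perp$: for $D \in \mathcal{V}_{(\mathcal{L}^\perp)}$ every row $\mathbf{d}_i$ lies in $\mathcal{L}^\perp$, and for $C \in \mathcal{V}_\mathcal{L}$ every row $\mathbf{c}_i$ lies in $\mathcal{L}$, so each term $\mathbf{c}_i \cdot \mathbf{d}_i$ vanishes and hence $\langle C, D \rangle = 0$.

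To upgrade this inclusion to an equality, I would count dimensions using \eqref{dimension matrix modules}. That formula gives $\dim(\mathcal{V}_\mathcal{L}) = m \dim(\mathcal{L})$ and $\dim(\mathcal{V}_{(\mathcal{L}^\perp)}) = m \dim(\mathcal{L}^\perp) = m(n - \dim(\mathcal{L}))$, using the standard identity $\dim(\mathcal{L}^\perp) = n - \dim(\mathcal{L})$ for the ordinary inner product on $\mathbb{F}^n$. On the other hand, the dual-dimension property of the trace product gives $\dim((\mathcal{V}_\mathcal{L})^\perp) = mn - \dim(\mathcal{V}_\mathcal{L}) = m(n - \dim(\mathcal{L}))$. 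Since the two spaces have the same dimension and one is contained in the other, they must coincide.

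There is essentially no hard obstacle here; the real content is the row-wise splitting of the trace product, which is precisely what forces the dual of a rank support space to again be a rank support space (a product mixing distinct rows would destroy this structure). If one prefers to avoid the dimension count, the reverse inclusion can be shown directly: given $D \in (\mathcal{V}_\mathcal{L})^\perp$ and any $\mathbf{c} \in \mathcal{L}$, testing $D$ against the matrix having $\mathbf{c}$ in a single row $i_0$ and zeros elsewhere (which lies in $\mathcal{V}_\mathcal{L}$, as its row space is $\langle \mathbf{c} \rangle \subseteq \mathcal{L}$) yields $\mathbf{c} \cdot \mathbf{d}_{i_0} = 0$; ranging over all $\mathbf{c} \in \mathcal{L}$ and all $i_0$ shows that every row of $D$ lies in $\mathcal{L}^\perp$, that is, $D \in \mathcal{V}_{(\mathcal{L}^\perp)}$.
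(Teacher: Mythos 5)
Your proof is correct. Note that this conference version of the paper does not actually print a proof of Proposition~\ref{dual rank support} (it is among the proofs deferred to the extended version \cite{unifying}), so there is nothing in the source to compare against line by line; your argument is a valid, self-contained substitute that uses only ingredients already established in the paper, namely the dimension formula (\ref{dimension matrix modules}) from Theorem~\ref{theorem characterization} and the standard trace-product duality facts ($\dim(\mathcal{C}^\perp)=mn-\dim(\mathcal{C})$, $\dim(\mathcal{L}^\perp)=n-\dim(\mathcal{L})$). Both of your routes work: the inclusion-plus-dimension-count is the quickest, while your alternative direct argument (testing $D$ against matrices with a single nonzero row $\mathbf{c}\in\mathcal{L}$) is essentially an application of the basis description in item~2 of Theorem~\ref{theorem characterization}, and is the argument one would expect the extended version to give; it has the small advantage of avoiding any appeal to dimension counts and hence to nondegeneracy of the bilinear forms beyond what is stated in the paper.
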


With these tools, we may now define the new parameters:

\begin{definition} [\textbf{Relative Dimension/Rank support Profile}]
Given nested linear codes $ \mathcal{C}_2 \subsetneqq \mathcal{C}_1 \subseteq \mathbb{F}^{m \times n} $, and $ 0 \leq \mu \leq n $, we define their $ \mu $-th relative dimension/rank support profile (RDRP) as
\begin{equation*}
\begin{split}
K_{M,\mu}(\mathcal{C}_1, \mathcal{C}_2) = \max \{ & \dim(\mathcal{C}_1 \cap \mathcal{V}_\mathcal{L}) - \dim(\mathcal{C}_2 \cap \mathcal{V}_\mathcal{L}) \mid \\
 & \mathcal{L} \subseteq \mathbb{F}^n, \dim(\mathcal{L}) \leq \mu \}.
\end{split}
\end{equation*}
\end{definition}

\begin{definition} [\textbf{Relative Generalized Matrix Weight}]
Given nested linear codes $ \mathcal{C}_2 \subsetneqq \mathcal{C}_1 \subseteq \mathbb{F}^{m \times n} $, and $ 1 \leq r \leq \ell = \dim(\mathcal{C}_1 / \mathcal{C}_2) $, we define their $ r $-th relative generalized matrix weight (RGMW) as
\begin{equation*}
\begin{split}
d_{M,r}(\mathcal{C}_1, \mathcal{C}_2) = \min \{ & \dim(\mathcal{L}) \mid \mathcal{L} \subseteq \mathbb{F}^n, \\
 & \dim(\mathcal{C}_1 \cap \mathcal{V}_\mathcal{L}) - \dim(\mathcal{C}_2 \cap \mathcal{V}_\mathcal{L}) \geq r \}.
\end{split}
\end{equation*}
For a linear code $ \mathcal{C} \subseteq \mathbb{F}^{m \times n} $, and $ 1 \leq r \leq \dim(\mathcal{C}) $, we define its $ r $-th generalized matrix weight (GMW) as
\begin{equation}
d_{M,r}(\mathcal{C}) = d_{M,r}(\mathcal{C},\{ 0 \}).
\label{GMW}
\end{equation}
\end{definition}

We next obtain the following characterization of RGMWs that gives an analogous description to the original definition of generalized Hamming weights by Wei \cite{wei}:

\begin{theorem} \label{theorem as minimum rank weights}
Given nested linear codes $ \mathcal{C}_2 \subsetneqq \mathcal{C}_1 \subseteq \mathbb{F}^{m \times n} $, and an integer $ 1 \leq r \leq \dim(\mathcal{C}_1 / \mathcal{C}_2) $, it holds that
\begin{equation*}
\begin{split}
d_{M,r}(\mathcal{C}_1, \mathcal{C}_2) = \min \{ & {\rm wt_R}(\mathcal{D}) \mid \mathcal{D} \subseteq \mathcal{C}_1, \mathcal{D} \cap \mathcal{C}_2 = \{ 0 \}, \\
 & \dim(\mathcal{D}) = r \}.
\end{split}
\end{equation*}
In particular, it holds that
$$ d_{M,1}(\mathcal{C}_1,\mathcal{C}_2) = d_R(\mathcal{C}_1,\mathcal{C}_2) = \min \{ {\rm Rk}(C) \mid C \in \mathcal{C}_1, C \notin \mathcal{C}_2 \}. $$
\end{theorem}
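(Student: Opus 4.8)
The plan is to prove the two inequalities separately, exploiting two immediate consequences of the definitions of rank support and rank support space: for any subspace $ \mathcal{D} \subseteq \mathbb{F}^{m \times n} $ one has $ \mathcal{D} \subseteq \mathcal{V}_{{\rm RSupp}(\mathcal{D})} $, and conversely $ \mathcal{D} \subseteq \mathcal{V}_\mathcal{L} $ forces $ {\rm RSupp}(\mathcal{D}) \subseteq \mathcal{L} $, hence $ {\rm wt_R}(\mathcal{D}) \leq \dim(\mathcal{L}) $. Denote the right-hand side of the claimed identity by $ M $.

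For the inequality $ d_{M,r}(\mathcal{C}_1,\mathcal{C}_2) \leq M $, I would start from a subspace $ \mathcal{D} \subseteq \mathcal{C}_1 $ attaining $ M $, so that $ \mathcal{D} \cap \mathcal{C}_2 = \{0\} $, $ \dim(\mathcal{D}) = r $, and $ {\rm wt_R}(\mathcal{D}) = M $. Setting $ \mathcal{L} = {\rm RSupp}(\mathcal{D}) $ gives $ \dim(\mathcal{L}) = M $ and $ \mathcal{D} \subseteq \mathcal{C}_1 \cap \mathcal{V}_\mathcal{L} $. The key step is to observe that $ \mathcal{D} \cap (\mathcal{C}_2 \cap \mathcal{V}_\mathcal{L}) = (\mathcal{D} \cap \mathcal{C}_2) \cap \mathcal{V}_\mathcal{L} = \{0\} $, so that the sum $ \mathcal{D} + (\mathcal{C}_2 \cap \mathcal{V}_\mathcal{L}) $ is direct and contained in $ \mathcal{C}_1 \cap \mathcal{V}_\mathcal{L} $. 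Counting dimensions yields $ \dim(\mathcal{C}_1 \cap \mathcal{V}_\mathcal{L}) - \dim(\mathcal{C}_2 \cap \mathcal{V}_\mathcal{L}) \geq \dim(\mathcal{D}) = r $, so $ \mathcal{L} $ is admissible in the definition of $ d_{M,r} $, whence $ d_{M,r}(\mathcal{C}_1,\mathcal{C}_2) \leq \dim(\mathcal{L}) = M $.

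For the reverse inequality, I would take a subspace $ \mathcal{L} $ attaining $ d_{M,r}(\mathcal{C}_1,\mathcal{C}_2) $, so $ \dim(\mathcal{C}_1 \cap \mathcal{V}_\mathcal{L}) - \dim(\mathcal{C}_2 \cap \mathcal{V}_\mathcal{L}) \geq r $. Writing $ \mathcal{C}_1 \cap \mathcal{V}_\mathcal{L} = (\mathcal{C}_2 \cap \mathcal{V}_\mathcal{L}) \oplus \mathcal{W} $, the complement $ \mathcal{W} $ has dimension at least $ r $, so I may pick $ \mathcal{D} \subseteq \mathcal{W} $ with $ \dim(\mathcal{D}) = r $. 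Then $ \mathcal{D} \subseteq \mathcal{C}_1 $ and $ \mathcal{D} \cap (\mathcal{C}_2 \cap \mathcal{V}_\mathcal{L}) = \{0\} $. The crucial point, and the place where one must be careful, is upgrading this to $ \mathcal{D} \cap \mathcal{C}_2 = \{0\} $: since $ \mathcal{D} \subseteq \mathcal{V}_\mathcal{L} $, any $ C \in \mathcal{D} \cap \mathcal{C}_2 $ also lies in $ \mathcal{V}_\mathcal{L} $, hence in $ \mathcal{C}_2 \cap \mathcal{V}_\mathcal{L} $, forcing $ C = 0 $. Finally, $ \mathcal{D} \subseteq \mathcal{V}_\mathcal{L} $ gives $ {\rm RSupp}(\mathcal{D}) \subseteq \mathcal{L} $, so $ {\rm wt_R}(\mathcal{D}) \leq \dim(\mathcal{L}) = d_{M,r}(\mathcal{C}_1,\mathcal{C}_2) $, and therefore $ M \leq d_{M,r}(\mathcal{C}_1,\mathcal{C}_2) $.

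The main obstacle is precisely this replacement of $ \mathcal{C}_2 \cap \mathcal{V}_\mathcal{L} $ by $ \mathcal{C}_2 $ (and its analogue in the first direction, guaranteeing the sum is direct); everything else is dimension counting. The ``in particular'' statement then follows by specializing to $ r = 1 $: a one-dimensional $ \mathcal{D} = \langle \{ C \} \rangle $ satisfies $ \mathcal{D} \cap \mathcal{C}_2 = \{0\} $ exactly when $ C \notin \mathcal{C}_2 $, and $ {\rm wt_R}(\langle \{ C \} \rangle) = {\rm Rk}(C) $ by the remark following the definition of rank weight, so the minimum collapses to $ \min \{ {\rm Rk}(C) \mid C \in \mathcal{C}_1, C \notin \mathcal{C}_2 \} $, which is by definition $ d_R(\mathcal{C}_1,\mathcal{C}_2) $.
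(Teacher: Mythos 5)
Your proof is correct and follows essentially the same route as the paper's: both directions use $\mathcal{L} = {\rm RSupp}(\mathcal{D})$ with $\mathcal{D} \subseteq \mathcal{V}_\mathcal{L}$ one way, and extraction of an $r$-dimensional $\mathcal{D} \subseteq \mathcal{C}_1 \cap \mathcal{V}_\mathcal{L}$ with $\mathcal{D} \cap \mathcal{C}_2 = \{0\}$ the other way. Your only addition is to spell out, via the complement $\mathcal{W}$ and the upgrade from $\mathcal{D} \cap (\mathcal{C}_2 \cap \mathcal{V}_\mathcal{L}) = \{0\}$ to $\mathcal{D} \cap \mathcal{C}_2 = \{0\}$, the existence step that the paper asserts without justification.
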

\begin{proof}
Denote by $ d_r $ and $ d_r^\prime $ the left-hand and right-hand sides of the first equality, respectively.

First, take a vector space $ \mathcal{D} \subseteq \mathcal{C}_1 $ such that $ \mathcal{D} \cap \mathcal{C}_2 = \{ 0 \} $, $ \dim(\mathcal{D}) = r $ and $ {\rm wt_R}(\mathcal{D}) = d^\prime_r $. Define $ \mathcal{L} = {\rm RSupp}(\mathcal{D}) $.

Since $ \mathcal{D} \subseteq \mathcal{V}_\mathcal{L} $, we have that $ \dim((\mathcal{C}_1 \cap \mathcal{V}_\mathcal{L})/(\mathcal{C}_2 \cap \mathcal{V}_\mathcal{L})) \geq \dim((\mathcal{C}_1 \cap \mathcal{D})/(\mathcal{C}_2 \cap \mathcal{D})) = \dim(\mathcal{D}) = r $. Hence
$$ d_r \leq \dim(\mathcal{L}) = {\rm wt_R}(\mathcal{D}) = d_r^\prime. $$

Conversely, take a vector space $ \mathcal{L} \subseteq \mathbb{F}^n $, such that $ \dim((\mathcal{C}_1 \cap \mathcal{V}_\mathcal{L})/(\mathcal{C}_2 \cap \mathcal{V}_\mathcal{L})) \geq r $ and $ \dim(\mathcal{L}) = d_r $.

There exists a vector space $ \mathcal{D} \subseteq \mathcal{C}_1 \cap \mathcal{V}_\mathcal{L} $ with $ \mathcal{D} \cap \mathcal{C}_2 = \{ 0 \} $ and $ \dim(\mathcal{D}) = r $. We have that $ {\rm RSupp}(\mathcal{D}) \subseteq \mathcal{L} $, since $ \mathcal{D} \subseteq \mathcal{V}_\mathcal{L} $, and hence
$$ d_r = \dim(\mathcal{L}) \geq {\rm wt_R}(\mathcal{D}) \geq d_r^\prime. $$
\end{proof}

Finally, we show the monotonicity properties of RDRPs and RGMWs (see \cite{unifying} for a proof):

\begin{proposition} [\textbf{Monotonicity of RDRPs}] \label{monotonicity RDRP}
Given nested linear codes $ \mathcal{C}_2 \subsetneqq \mathcal{C}_1 \subseteq \mathbb{F}^{m \times n} $, and $ 0 \leq \mu \leq n-1 $, it holds that $ K_{M,0}(\mathcal{C}_1, \mathcal{C}_2) = 0 $, $ K_{M,n}(\mathcal{C}_1, \mathcal{C}_2) = \dim(\mathcal{C}_1/\mathcal{C}_2) $ and
$$ 0 \leq K_{M,\mu + 1}(\mathcal{C}_1, \mathcal{C}_2) - K_{M,\mu}(\mathcal{C}_1, \mathcal{C}_2) \leq m. $$
\end{proposition}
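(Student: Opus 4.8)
The plan is to handle the two boundary values and the two monotonicity inequalities separately, reducing everything to the dimension formula $\dim(\mathcal{V}_\mathcal{L}) = m\dim(\mathcal{L})$ recorded in \eqref{dimension matrix modules} together with elementary isomorphism theorems. For the boundary values I would first observe, straight from the definition of rank support spaces, that $\mathcal{V}_{\{0\}} = \{0\}$ and $\mathcal{V}_{\mathbb{F}^n} = \mathbb{F}^{m\times n}$. The constraint $\mu = 0$ forces $\mathcal{L} = \{0\}$, so both intersections vanish and $K_{M,0}(\mathcal{C}_1,\mathcal{C}_2) = 0$. For $\mu = n$, the admissible choice $\mathcal{L} = \mathbb{F}^n$ yields the difference $\dim(\mathcal{C}_1) - \dim(\mathcal{C}_2) = \dim(\mathcal{C}_1/\mathcal{C}_2)$, giving $K_{M,n}(\mathcal{C}_1,\mathcal{C}_2) \geq \dim(\mathcal{C}_1/\mathcal{C}_2)$; the reverse inequality holds for every $\mathcal{L}$ because the second isomorphism theorem identifies $(\mathcal{C}_1 \cap \mathcal{V}_\mathcal{L})/(\mathcal{C}_2 \cap \mathcal{V}_\mathcal{L})$ with the subspace $((\mathcal{C}_1 \cap \mathcal{V}_\mathcal{L}) + \mathcal{C}_2)/\mathcal{C}_2$ of $\mathcal{C}_1/\mathcal{C}_2$.

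The lower bound $K_{M,\mu+1}(\mathcal{C}_1,\mathcal{C}_2) - K_{M,\mu}(\mathcal{C}_1,\mathcal{C}_2) \geq 0$ is immediate: every $\mathcal{L}$ admissible for $\mu$ is also admissible for $\mu+1$, so the maximum over the larger family can only increase.

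The substance is the upper bound. Here I would fix a subspace $\mathcal{L}$ attaining the maximum that defines $K_{M,\mu+1}(\mathcal{C}_1,\mathcal{C}_2)$, and assume $\dim(\mathcal{L}) = \mu+1$, since otherwise $\mathcal{L}$ is already admissible for $\mu$ and the increment is nonpositive. Picking any hyperplane $\mathcal{L}' \subset \mathcal{L}$ with $\dim(\mathcal{L}') = \mu$, the inclusion $\mathcal{V}_{\mathcal{L}'} \subseteq \mathcal{V}_\mathcal{L}$ together with \eqref{dimension matrix modules} gives $\dim(\mathcal{V}_\mathcal{L}/\mathcal{V}_{\mathcal{L}'}) = m(\mu+1) - m\mu = m$. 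The key point is that, for each $i \in \{1,2\}$, composing $\mathcal{C}_i \cap \mathcal{V}_\mathcal{L} \hookrightarrow \mathcal{V}_\mathcal{L}$ with the projection $\mathcal{V}_\mathcal{L} \to \mathcal{V}_\mathcal{L}/\mathcal{V}_{\mathcal{L}'}$ has kernel exactly $\mathcal{C}_i \cap \mathcal{V}_{\mathcal{L}'}$, so that
$$ 0 \leq \dim(\mathcal{C}_i \cap \mathcal{V}_\mathcal{L}) - \dim(\mathcal{C}_i \cap \mathcal{V}_{\mathcal{L}'}) \leq m. $$
Applying the upper estimate $m$ to the $ \mathcal{C}_1 $-term and the trivial lower estimate $0$ to the $ \mathcal{C}_2 $-term gives
$$ \bigl( \dim(\mathcal{C}_1 \cap \mathcal{V}_\mathcal{L}) - \dim(\mathcal{C}_2 \cap \mathcal{V}_\mathcal{L}) \bigr) - \bigl( \dim(\mathcal{C}_1 \cap \mathcal{V}_{\mathcal{L}'}) - \dim(\mathcal{C}_2 \cap \mathcal{V}_{\mathcal{L}'}) \bigr) \leq m. $$
Since $\mathcal{L}'$ is admissible for $\mu$, the second parenthesis is at most $K_{M,\mu}(\mathcal{C}_1,\mathcal{C}_2)$, while the first equals $K_{M,\mu+1}(\mathcal{C}_1,\mathcal{C}_2)$ by the choice of $\mathcal{L}$; rearranging yields the claim.

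I expect the only subtle point to be the \emph{asymmetric} use of the two-sided estimate $0 \leq a_i \leq m$: one must bound the $ \mathcal{C}_1 $ drop from above by $m$ and the $ \mathcal{C}_2 $ drop from below by $0$, rather than attempting to control each dimension change exactly. Once this asymmetry is recognized, the remainder is routine bookkeeping with \eqref{dimension matrix modules} and the rank--nullity theorem.
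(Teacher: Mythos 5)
Your proof is correct. The paper itself omits the proof of this proposition (deferring to the extended version \cite{unifying}), and your argument --- boundary cases via $\mathcal{V}_{\{0\}}=\{0\}$ and $\mathcal{V}_{\mathbb{F}^n}=\mathbb{F}^{m\times n}$ with the second isomorphism theorem, nestedness of the admissible families for the lower bound, and for the upper bound a hyperplane $\mathcal{L}'\subset\mathcal{L}$ combined with rank--nullity applied to $\mathcal{C}_i\cap\mathcal{V}_{\mathcal{L}}\to\mathcal{V}_{\mathcal{L}}/\mathcal{V}_{\mathcal{L}'}$ and the asymmetric estimates $a_1\leq m$, $a_2\geq 0$ --- is exactly the natural route through equation (\ref{dimension matrix modules}) that the omitted proof follows.
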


\begin{proposition} [\textbf{Monotonicity of RGMWs}] \label{monotonicity of RGMW}
Given nested linear codes $ \mathcal{C}_2 \subsetneqq \mathcal{C}_1 \subseteq \mathbb{F}^{m \times n} $ with $ \ell = \dim(\mathcal{C}_1 / \mathcal{C}_2) $, it holds that
$$ 0 \leq d_{M,r+1}(\mathcal{C}_1, \mathcal{C}_2) - d_{M,r}(\mathcal{C}_1, \mathcal{C}_2) \leq \min \{ m, n \}, $$ 
for $ 1 \leq r \leq \ell - 1 $, and
$$ d_{M,r}(\mathcal{C}_1, \mathcal{C}_2) + 1 \leq d_{M,r+m}(\mathcal{C}_1, \mathcal{C}_2), $$
for $ 1 \leq r \leq \ell - m $. 
\end{proposition}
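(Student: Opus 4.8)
The plan is to prove the three inequalities separately, relying on the original definition of $ d_{M,r} $ for the two monotonicity bounds and on the rank-weight characterization of Theorem~\ref{theorem as minimum rank weights} for the step-size upper bound.

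First, the inequality $ 0 \leq d_{M,r+1}(\mathcal{C}_1,\mathcal{C}_2) - d_{M,r}(\mathcal{C}_1,\mathcal{C}_2) $ is immediate from the definition: any subspace $ \mathcal{L} \subseteq \mathbb{F}^n $ satisfying $ \dim(\mathcal{C}_1 \cap \mathcal{V}_\mathcal{L}) - \dim(\mathcal{C}_2 \cap \mathcal{V}_\mathcal{L}) \geq r+1 $ also satisfies the same inequality with $ r $ on the right, so the feasible set defining $ d_{M,r} $ contains the one defining $ d_{M,r+1} $, and the minimum can only decrease.

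For the upper bound $ d_{M,r+1}(\mathcal{C}_1,\mathcal{C}_2) - d_{M,r}(\mathcal{C}_1,\mathcal{C}_2) \leq \min\{m,n\} $, I would use Theorem~\ref{theorem as minimum rank weights}. Choose $ \mathcal{D} \subseteq \mathcal{C}_1 $ with $ \mathcal{D} \cap \mathcal{C}_2 = \{0\} $, $ \dim(\mathcal{D}) = r $ and $ {\rm wt_R}(\mathcal{D}) = d_{M,r}(\mathcal{C}_1,\mathcal{C}_2) $. Since $ r < \ell = \dim(\mathcal{C}_1/\mathcal{C}_2) $, we have $ \dim(\mathcal{C}_2 \oplus \mathcal{D}) = \dim(\mathcal{C}_2) + r < \dim(\mathcal{C}_1) $, so there exists $ C \in \mathcal{C}_1 \setminus (\mathcal{C}_2 \oplus \mathcal{D}) $. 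Setting $ \mathcal{D}' = \mathcal{D} \oplus \langle C \rangle $ one checks directly that $ \dim(\mathcal{D}') = r+1 $ and $ \mathcal{D}' \cap \mathcal{C}_2 = \{0\} $. Because every element of $ \mathcal{D}' $ is a combination of an element of $ \mathcal{D} $ and a multiple of $ C $, we get $ {\rm RSupp}(\mathcal{D}') = {\rm RSupp}(\mathcal{D}) + {\rm Row}(C) $, hence $ {\rm wt_R}(\mathcal{D}') \leq {\rm wt_R}(\mathcal{D}) + {\rm Rk}(C) \leq d_{M,r}(\mathcal{C}_1,\mathcal{C}_2) + \min\{m,n\} $, since the rank of an $ m \times n $ matrix is at most $ \min\{m,n\} $. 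As $ \mathcal{D}' $ is feasible for $ d_{M,r+1} $, this yields the claim.

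For the strict-growth bound $ d_{M,r}(\mathcal{C}_1,\mathcal{C}_2) + 1 \leq d_{M,r+m}(\mathcal{C}_1,\mathcal{C}_2) $, valid whenever $ r+m \leq \ell $ so that the right-hand side is defined, I would argue from the definition. The central observation, and the step I expect to be the main obstacle, is a local monotonicity lemma: if $ \mathcal{L}' \subseteq \mathcal{L} \subseteq \mathbb{F}^n $ with $ \dim(\mathcal{L}) = \dim(\mathcal{L}') + 1 $, then for any linear code $ \mathcal{C} \subseteq \mathbb{F}^{m \times n} $ one has $ 0 \leq \dim(\mathcal{C} \cap \mathcal{V}_\mathcal{L}) - \dim(\mathcal{C} \cap \mathcal{V}_{\mathcal{L}'}) \leq m $. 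This follows because $ \mathcal{V}_{\mathcal{L}'} \subseteq \mathcal{V}_\mathcal{L} $ and, by the dimension formula $ \dim(\mathcal{V}_\mathcal{M}) = m \dim(\mathcal{M}) $ of Theorem~\ref{theorem characterization}, the quotient $ \mathcal{V}_\mathcal{L}/\mathcal{V}_{\mathcal{L}'} $ has dimension $ m $; the second isomorphism theorem then embeds $ (\mathcal{C} \cap \mathcal{V}_\mathcal{L})/(\mathcal{C} \cap \mathcal{V}_{\mathcal{L}'}) $ into this quotient. Applying the lemma to $ \mathcal{C}_1 $ (upper bound $ m $) and to $ \mathcal{C}_2 $ (lower bound $ 0 $), the function $ f(\mathcal{M}) = \dim(\mathcal{C}_1 \cap \mathcal{V}_\mathcal{M}) - \dim(\mathcal{C}_2 \cap \mathcal{V}_\mathcal{M}) $ satisfies $ f(\mathcal{L}) - f(\mathcal{L}') \leq m $. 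To conclude, take $ \mathcal{L} $ attaining $ d_{M,r+m}(\mathcal{C}_1,\mathcal{C}_2) $, so $ \dim(\mathcal{L}) = d_{M,r+m} $ and $ f(\mathcal{L}) \geq r+m $; here $ \dim(\mathcal{L}) \geq 1 $, since a zero-dimensional $ \mathcal{L} $ gives $ \mathcal{V}_\mathcal{L} = \{0\} $ and $ f(\mathcal{L}) = 0 $. Choosing any hyperplane $ \mathcal{L}' \subset \mathcal{L} $ of dimension $ \dim(\mathcal{L}) - 1 $, the lemma gives $ f(\mathcal{L}') \geq f(\mathcal{L}) - m \geq r $, so $ \mathcal{L}' $ is feasible for $ d_{M,r} $ and therefore $ d_{M,r}(\mathcal{C}_1,\mathcal{C}_2) \leq \dim(\mathcal{L}') = d_{M,r+m}(\mathcal{C}_1,\mathcal{C}_2) - 1 $, which is exactly the desired inequality.
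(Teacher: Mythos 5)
Your proof is correct in all three parts. Note that there is no in-paper argument to compare it against: this conference version explicitly defers the proof of Proposition \ref{monotonicity of RGMW} to the extended version \cite{unifying}, so your write-up stands or falls on its own, and it stands. The lower bound $0 \leq d_{M,r+1}(\mathcal{C}_1,\mathcal{C}_2) - d_{M,r}(\mathcal{C}_1,\mathcal{C}_2)$ via nestedness of the feasible sets is immediate and right. For the upper bound, your use of Theorem \ref{theorem as minimum rank weights} is the natural route (working directly with the definition would be awkward, since enlarging $\mathcal{L}$ by one dimension need not increase the dimension gap), and the two verifications that matter are both done correctly: $\mathcal{D}' \cap \mathcal{C}_2 = \{0\}$ because a nonzero multiple of $C$ appearing in $\mathcal{C}_2 + \mathcal{D}$ would contradict $C \notin \mathcal{C}_2 \oplus \mathcal{D}$, and $ {\rm wt_R}(\mathcal{D}') \leq {\rm wt_R}(\mathcal{D}) + {\rm Rk}(C) \leq {\rm wt_R}(\mathcal{D}) + \min\{m,n\}$ because $ {\rm RSupp}(\mathcal{D}') = {\rm RSupp}(\mathcal{D}) + {\rm Row}(C)$. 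For the third inequality, your local lemma --- that passing from a hyperplane $\mathcal{L}' \subset \mathcal{L}$ to $\mathcal{L}$ changes $\dim(\mathcal{C} \cap \mathcal{V}_\mathcal{L})$ by at most $m$, proved from the dimension formula (\ref{dimension matrix modules}) and the second isomorphism theorem embedding $(\mathcal{C} \cap \mathcal{V}_\mathcal{L})/(\mathcal{C} \cap \mathcal{V}_{\mathcal{L}'})$ into $\mathcal{V}_\mathcal{L}/\mathcal{V}_{\mathcal{L}'}$ --- is exactly the right ingredient, and it is essentially the same fact that underlies the monotonicity of RDRPs (Proposition \ref{monotonicity RDRP}), so your argument has the pleasant side effect of re-deriving that step bound as well. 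You also correctly rule out the degenerate case $\dim(\mathcal{L}) = 0$ before picking a hyperplane, and you correctly restrict to $r + m \leq \ell$ so that $d_{M,r+m}(\mathcal{C}_1,\mathcal{C}_2)$ is defined. The only cosmetic point is notational: the paper writes $\langle \{C\} \rangle$ rather than $\langle C \rangle$ for the span of a single matrix.
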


\section{Universal security performance of linear coset coding schemes}

\subsection{Measuring information leakage on networks} \label{subsec measuring info leakage}

In this subsection, we consider the problem of information leakage on the network, see Subsection \ref{subsection secure communication}, item 2. 

Assume that a given source wants to convey the message $ \mathbf{x} \in \mathbb{F}_q^\ell $, which we assume is a random variable with uniform distribution over $ \mathbb{F}_q^\ell $. Following Subsection \ref{subsection coding schemes}, the source encodes $ \mathbf{x} $ into a matrix $ C \in \mathbb{F}_q^{m \times n} $ using nested linear codes $ \mathcal{C}_2 \subsetneqq \mathcal{C}_1 \subseteq \mathbb{F}_q^{m \times n} $. We also assume that the distributions used in the encoding are all uniform (see Subsection \ref{subsection coding schemes}). 

According to the information leakage model in Subsection \ref{subsection secure communication}, item 2, a wire-tapping adversary obtains $ CB^T \in \mathbb{F}_q^{m \times \mu} $, for some matrix $ B \in \mathbb{F}_q^{\mu \times n} $. 

In the following proposition, $ I(X; Y) $ stands for the mutual information of two random variables $ X $ and $ Y $ (see \cite{cover}).

\begin{proposition} \label{information leakage calculation}
Given nested linear codes $ \mathcal{C}_2 \subsetneqq \mathcal{C}_1 \subseteq \mathbb{F}_q^{m \times n} $, a matrix $ B \in \mathbb{F}_q^{\mu \times n} $, and the uniform random variables $ \mathbf{x} $ and $ CB^T $, as in the previous paragraphs, it holds that
\begin{equation}
I(\mathbf{x}; CB^T) = \dim(\mathcal{C}_2^\perp \cap \mathcal{V}_\mathcal{L}) - \dim(\mathcal{C}_1^\perp \cap \mathcal{V}_\mathcal{L}),
\label{information leakage equation}
\end{equation}
where $ \mathcal{L} = {\rm Row}(B) $. 
\end{proposition}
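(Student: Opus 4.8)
The plan is to evaluate the mutual information through the identity $ I(\mathbf{x}; CB^T) = H(CB^T) - H(CB^T \mid \mathbf{x}) $ and to reduce both entropies to dimensions of vector spaces, exploiting the linear structure of the coset coding scheme. Writing $ C = \psi(\mathbf{x}) + C' $ with $ C' $ uniform over $ \mathcal{C}_2 $, the uniformity of $ \mathbf{x} $ over $ \mathbb{F}_q^\ell $ together with the decomposition $ \mathcal{C}_1 = \mathcal{C}_2 \oplus \mathcal{W} $ makes $ C $ uniform over $ \mathcal{C}_1 $. Since $ C \mapsto CB^T $ is $ \mathbb{F}_q $-linear, a uniform input is pushed forward to a uniform distribution on the image; hence, measuring entropy in $ q $-ary units, $ H(CB^T) = \dim(\mathcal{C}_1 B^T) $. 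Conditioning on $ \mathbf{x} = \mathbf{x}_0 $, the variable $ CB^T = \psi(\mathbf{x}_0)B^T + C'B^T $ is a fixed translate of $ C'B^T $ with $ C' $ uniform over $ \mathcal{C}_2 $, so $ H(CB^T \mid \mathbf{x}) = \dim(\mathcal{C}_2 B^T) $, and therefore $ I(\mathbf{x}; CB^T) = \dim(\mathcal{C}_1 B^T) - \dim(\mathcal{C}_2 B^T) $.

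Next I would apply the rank-nullity theorem to the restriction of $ C \mapsto CB^T $ to each $ \mathcal{C}_i $. A matrix $ V $ satisfies $ VB^T = 0 $ exactly when each of its rows is orthogonal to every row of $ B $, that is, when $ {\rm Row}(V) \subseteq {\rm Row}(B)^\perp = \mathcal{L}^\perp $; by the definition of rank support spaces (consistently with item 3 of Theorem \ref{theorem characterization}) this kernel is $ \mathcal{C}_i \cap \mathcal{V}_{\mathcal{L}^\perp} $. Thus $ \dim(\mathcal{C}_i B^T) = \dim(\mathcal{C}_i) - \dim(\mathcal{C}_i \cap \mathcal{V}_{\mathcal{L}^\perp}) $, and subtracting the two cases gives
\begin{equation*}
I(\mathbf{x}; CB^T) = \dim(\mathcal{C}_1) - \dim(\mathcal{C}_2) - \dim(\mathcal{C}_1 \cap \mathcal{V}_{\mathcal{L}^\perp}) + \dim(\mathcal{C}_2 \cap \mathcal{V}_{\mathcal{L}^\perp}).
\end{equation*}

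The last step is to rewrite each intersection with $ \mathcal{V}_{\mathcal{L}^\perp} $ as an intersection of the dual code with $ \mathcal{V}_\mathcal{L} $. Using Proposition \ref{dual rank support} in the form $ (\mathcal{V}_{\mathcal{L}^\perp})^\perp = \mathcal{V}_\mathcal{L} $, the duality rule $ (\mathcal{C}_i \cap \mathcal{V}_{\mathcal{L}^\perp})^\perp = \mathcal{C}_i^\perp + \mathcal{V}_\mathcal{L} $, the relation $ \dim(\mathcal{C}^\perp) = mn - \dim(\mathcal{C}) $, and the dimension formula for the sum of two subspaces, one obtains
\begin{equation*}
\dim(\mathcal{C}_i \cap \mathcal{V}_{\mathcal{L}^\perp}) = \dim(\mathcal{C}_i) - \dim(\mathcal{V}_\mathcal{L}) + \dim(\mathcal{C}_i^\perp \cap \mathcal{V}_\mathcal{L}).
\end{equation*}
Substituting this for $ i = 1 $ and $ i = 2 $ into the previous display, the terms $ \dim(\mathcal{C}_1) $, $ \dim(\mathcal{C}_2) $ and $ \dim(\mathcal{V}_\mathcal{L}) $ all cancel, leaving $ \dim(\mathcal{C}_2^\perp \cap \mathcal{V}_\mathcal{L}) - \dim(\mathcal{C}_1^\perp \cap \mathcal{V}_\mathcal{L}) $, which is the claimed formula (\ref{information leakage equation}).

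The computation is essentially routine linear algebra once the information-theoretic reduction is in place; the one point requiring care is the bookkeeping of $ \mathcal{L} $ versus $ \mathcal{L}^\perp $. The kernels arising from $ VB^T = 0 $ are naturally expressed through $ \mathcal{V}_{\mathcal{L}^\perp} $ (rows orthogonal to those of $ B $), and it is only after the duality computation that the leakage is expressed through $ \mathcal{V}_\mathcal{L} $ with $ \mathcal{L} = {\rm Row}(B) $; conflating the two would yield the wrong space. I also expect the only genuine modelling input to be the elementary fact that a uniform distribution is carried to a uniform distribution on the image under an $ \mathbb{F}_q $-linear map, which underlies both entropy computations.
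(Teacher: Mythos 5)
Your proof is correct and takes essentially the same route as the paper's: both reduce $I(\mathbf{x}; CB^T) = H(CB^T) - H(CB^T \mid \mathbf{x})$ to dimensions of images of $f(D) = DB^T$ via uniformity and rank--nullity, identify $\ker(f) = \mathcal{V}_{\mathcal{L}^\perp}$ (Theorem \ref{theorem characterization}, item 3), and finish with a duality/dimension computation. The only difference is that you spell out explicitly the final step, using $(\mathcal{V}_{\mathcal{L}^\perp})^\perp = \mathcal{V}_\mathcal{L}$ and the sum-dimension formula, which the paper compresses into the phrase ``a dimensions computation.''
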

\begin{proof}
Define the linear map $ f : \mathbb{F}_q^{m \times n} \longrightarrow \mathbb{F}_q^{m \times \mu} $ by $ f(D) = DB^T $, $ D \in \mathbb{F}_q^{m \times n} $. It holds that
$$ H(CB^T) = H(f(C)) = \log_q ( \# f(\mathcal{C}_1)) = \dim(f(\mathcal{C}_1)) $$
$$ = \dim(\mathcal{C}_1) - \dim(\ker (f) \cap \mathcal{C}_1). $$
Similarly, for the conditional entropy:
$$ H(CB^T \mid \mathbf{x}) = \dim(\mathcal{C}_2) - \dim(\ker (f) \cap \mathcal{C}_2). $$

It holds that $ \ker(f) = \mathcal{V}_{\mathcal{L}^\perp} \subseteq \mathbb{F}^{m \times n} $ by Theorem \ref{theorem characterization}. Thus, using $ I(\mathbf{x}; CB^T) = H(CB^T) - H(CB^T \mid \mathbf{x}) $, $ \ker(f) = \mathcal{V}_{\mathcal{L}^\perp} $ and a dimensions computation, (\ref{information leakage equation}) follows.
\end{proof}

The following theorem follows from the previous proposition, the definitions and Theorem \ref{theorem as minimum rank weights}:

\begin{theorem}[\textbf{Worst case information leakage}] \label{theorem worst case leakage}
Given nested linear codes $ \mathcal{C}_2 \subsetneqq \mathcal{C}_1 \subseteq \mathbb{F}_q^{m \times n} $, and integers $ 0 \leq \mu \leq n $ and $ 1 \leq r \leq \dim(\mathcal{C}_1 / \mathcal{C}_2) $, it holds that
\begin{enumerate}
\item
$ r = K_{M,\mu}(\mathcal{C}_2^\perp, \mathcal{C}_1^\perp) $ is the maximum information (number of bits multiplied by $ \log_2(q) $) about the sent message that can be obtained by wire-tapping at most $ \mu $ links of the network.
\item
$ \mu = d_{M,r}(\mathcal{C}_2^\perp, \mathcal{C}_1^\perp) $ is the minimum number of links that an adversary needs to wire-tap in order to obtain at least $ r $ units of information (number of bits multiplied by $ \log_2(q) $) of the sent message.
\end{enumerate}
In particular, $ t = d_R(\mathcal{C}_2^\perp, \mathcal{C}_1^\perp) - 1 $ is the maximum number of links that an adversary may listen to without obtaining any information about the sent message.
\end{theorem}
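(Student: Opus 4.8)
The plan is to read off both statements directly from the explicit leakage formula in Proposition~\ref{information leakage calculation} together with the definitions of RDRP and RGMW, after reducing the optimization over wire-tapping matrices to an optimization over subspaces of $ \mathbb{F}_q^n $. The crucial observation is that Proposition~\ref{information leakage calculation} shows that $ I(\mathbf{x}; CB^T) $ depends on the matrix $ B $ only through its row space $ \mathcal{L} = {\rm Row}(B) $, via the integer $ \dim(\mathcal{C}_2^\perp \cap \mathcal{V}_\mathcal{L}) - \dim(\mathcal{C}_1^\perp \cap \mathcal{V}_\mathcal{L}) $, measured in units of $ \log_2(q) $ bits. Thus every quantity to be optimized is a function of $ \mathcal{L} $ alone, and it remains only to identify the feasible range of $ \mathcal{L} $.

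First I would establish the correspondence between wire-tapping and subspaces. Wire-tapping at most $ \mu $ links corresponds to a matrix $ B $ with at most $ \mu $ rows, so $ \dim(\mathcal{L}) = {\rm Rk}(B) \leq \mu $; conversely, any subspace $ \mathcal{L} \subseteq \mathbb{F}_q^n $ with $ \dim(\mathcal{L}) \leq \mu $ is realized as $ {\rm Row}(B) $ by taking the rows of $ B $ to be a basis of $ \mathcal{L} $. Hence the row spaces obtainable by tapping at most $ \mu $ links are exactly the subspaces of dimension at most $ \mu $. For item~1, the maximum leakage over all such taps is therefore $ \max \{ \dim(\mathcal{C}_2^\perp \cap \mathcal{V}_\mathcal{L}) - \dim(\mathcal{C}_1^\perp \cap \mathcal{V}_\mathcal{L}) \mid \dim(\mathcal{L}) \leq \mu \} $, which is precisely $ K_{M,\mu}(\mathcal{C}_2^\perp, \mathcal{C}_1^\perp) $ by the definition of RDRP applied to the dual pair.

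For item~2, the requirement that the adversary obtain at least $ r $ units of information becomes $ \dim(\mathcal{C}_2^\perp \cap \mathcal{V}_\mathcal{L}) - \dim(\mathcal{C}_1^\perp \cap \mathcal{V}_\mathcal{L}) \geq r $, and minimizing the number of tapped links amounts to minimizing $ \dim(\mathcal{L}) $ subject to this constraint; by definition this minimum equals $ d_{M,r}(\mathcal{C}_2^\perp, \mathcal{C}_1^\perp) $. The concluding assertion follows by specializing item~2 to $ r = 1 $: the least number of links yielding any nonzero information is $ d_{M,1}(\mathcal{C}_2^\perp, \mathcal{C}_1^\perp) $, which equals $ d_R(\mathcal{C}_2^\perp, \mathcal{C}_1^\perp) $ by Theorem~\ref{theorem as minimum rank weights}, so the largest number of links guaranteeing zero leakage is one less, namely $ d_R(\mathcal{C}_2^\perp, \mathcal{C}_1^\perp) - 1 $.

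I do not expect a genuine obstacle here, since the argument is essentially bookkeeping once Proposition~\ref{information leakage calculation} is available. The only points that need care are the reduction from matrices $ B $ to their row spaces $ \mathcal{L} $ (checking both that the leakage depends on $ \mathcal{L} $ alone and that the feasible $ \mathcal{L} $ range over all subspaces of the relevant dimension) and the matching of ``units of information'' with the integer-valued dimension difference measured in $ q $-ary symbols, so that the two optimization problems coincide verbatim with the definitions of $ K_{M,\mu} $ and $ d_{M,r} $.
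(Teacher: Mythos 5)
Your proposal is correct and follows exactly the route the paper takes: the paper derives Theorem~\ref{theorem worst case leakage} directly from Proposition~\ref{information leakage calculation} (leakage depends on $B$ only through $\mathcal{L} = {\rm Row}(B)$), the definitions of $K_{M,\mu}$ and $d_{M,r}$ applied to the dual pair $\mathcal{C}_2^\perp \supsetneqq \mathcal{C}_1^\perp$, and Theorem~\ref{theorem as minimum rank weights} for the identification $d_{M,1} = d_R$ in the final claim. Your explicit treatment of the matrix-to-row-space reduction and the realizability of every subspace $\mathcal{L}$ with $\dim(\mathcal{L}) \leq \mu$ as a wire-tap is precisely the bookkeeping the paper leaves implicit.
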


\subsection{Optimal linear coset coding schemes for noiseless networks}

In this subsection, we obtain linear coset coding schemes built from nested linear code pairs $ \mathcal{C} \subsetneqq \mathbb{F}^{m \times n} $ with optimal universal security parameters in the case of finite fields $ \mathbb{F} = \mathbb{F}_q $. Recall from Subsection \ref{subsection coding schemes} that these linear coset coding schemes are suitable for noiseless networks, as noticed in \cite{ozarow}. 

\begin{definition} \label{definition security parameter}
For a nested linear code pair of the form $ \mathcal{C} \subsetneqq \mathbb{F}_q^{m \times n} $, we define its information parameter as $ \ell = \dim(\mathbb{F}_q^{m \times n} / \mathcal{C}) = \dim(\mathcal{C}^\perp) $, that is the maximum number of $ \log_2(q) $ bits of information that the source can convey, and its security parameter $ t $ as the maximum number of links that an adversary may listen to without obtaining any information about the sent message. 
\end{definition}

We study two problems: 
\begin{enumerate}
\item
Find a nested linear code pair $ \mathcal{C} \subsetneqq \mathbb{F}_q^{m \times n} $ with maximum possible security parameter $ t $ when $ m $, $ n $, $ q $ and the information parameter $ \ell $ are fixed and given.
\item
Find a nested linear code pair $ \mathcal{C} \subsetneqq \mathbb{F}_q^{m \times n} $ with maximum possible information parameter $ \ell $ when $ m $, $ n $, $ q $ and the security parameter $ t $ are fixed and given.
\end{enumerate}

Thanks to Theorem \ref{theorem worst case leakage}, which implies that $ t = d_R(\mathcal{C}^\perp) - 1 $, and the Singleton bound on the minimum rank distance \cite[Theorem 5.4]{delsartebilinear}, we may give upper bounds on the attainable parameters in the previous two problems:

\begin{proposition}
Given a nested linear code pair $ \mathcal{C} \subsetneqq \mathbb{F}_q^{m \times n} $ with information parameter $ \ell $ and security parameter $ t $, it holds that:
\begin{equation}
\ell \leq \max \{ m,n \} (\min \{ m,n \} - t),
\end{equation}
\begin{equation}
t \leq \min \{ m,n \} - \left\lceil \frac{\ell}{\max \{ m,n \}} \right\rceil.
\end{equation}
In particular, $ \ell \leq mn $ and $ t \leq \min \{ m,n \} $.
\end{proposition}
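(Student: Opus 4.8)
The plan is to reduce the statement to the Singleton bound for rank-metric codes applied to the dual code. As observed just before the statement, Theorem~\ref{theorem worst case leakage} gives that the security parameter equals $ t = d_R(\mathcal{C}^\perp) - 1 $, so the dual code $ \mathcal{C}^\perp \subseteq \mathbb{F}_q^{m \times n} $ is a linear rank-metric code of dimension $ \dim(\mathcal{C}^\perp) = \ell $ and minimum rank distance $ d_R(\mathcal{C}^\perp) = t+1 $. This turns both inequalities into pure statements about a single linear code in the rank metric, decoupled from the coset coding interpretation.

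First I would invoke the Singleton bound of Delsarte \cite[Theorem 5.4]{delsartebilinear}: any linear code $ \mathcal{D} \subseteq \mathbb{F}_q^{m \times n} $ with minimum rank distance $ d $ satisfies $ \dim(\mathcal{D}) \leq \max\{m,n\} (\min\{m,n\} - d + 1) $. Substituting $ \mathcal{D} = \mathcal{C}^\perp $, $ \dim(\mathcal{D}) = \ell $ and $ d = t+1 $ yields immediately
$$ \ell \leq \max\{m,n\} (\min\{m,n\} - t), $$
which is exactly the first inequality.

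For the second inequality, I would divide the first by $ \max\{m,n\} $ and rearrange to obtain $ t \leq \min\{m,n\} - \ell / \max\{m,n\} $. Since $ t $ and $ \min\{m,n\} $ are integers, I would then apply the identity $ \lfloor A - x \rfloor = A - \lceil x \rceil $, valid for integer $ A $, to the right-hand side; this converts the real-valued bound into the stated ceiling form $ t \leq \min\{m,n\} - \lceil \ell / \max\{m,n\} \rceil $. The two particular cases then follow by elementary observations: $ \ell \leq mn $ follows from the first inequality using $ t \geq 0 $ (and is anyway trivial, as $ \dim(\mathcal{C}^\perp) \leq mn $), while $ t \leq \min\{m,n\} $ follows from the second inequality since the ceiling term is nonnegative.

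There is essentially no deep technical obstacle here, as all the mathematical content is carried by the cited Delsarte bound together with the reduction $ t = d_R(\mathcal{C}^\perp) - 1 $ supplied by Theorem~\ref{theorem worst case leakage}. The only point requiring genuine care is the floor/ceiling manipulation in the second inequality, where one must use that $ t $ is an integer in order to round the real bound correctly rather than merely dropping it.
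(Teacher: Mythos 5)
Your proposal is correct and follows essentially the same route as the paper, which derives the proposition exactly from the identity $t = d_R(\mathcal{C}^\perp) - 1$ supplied by the worst-case information leakage theorem together with Delsarte's Singleton bound applied to the dual code $\mathcal{C}^\perp$ of dimension $\ell$. Your additional care with the integer rounding step $t \leq \min\{m,n\} - \lceil \ell/\max\{m,n\} \rceil$ is exactly the (routine) manipulation the paper leaves implicit.
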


On the other hand, the existence of linear codes in $ \mathbb{F}_q^{m \times n} $ attaining the Singleton bound on their dimensions, for all possible choices of $ m $, $ n $ and minimum rank distance $ d_R $ \cite[Theorem 6.3]{delsartebilinear}, leads to the following existence result on optimal linear coset coding schemes for noiseless networks.

\begin{theorem}
For all choices of positive integers $ m $ and $ n $, and all finite fields $ \mathbb{F}_q $, the following hold:
\begin{enumerate}
\item
For every positive integer $ \ell \leq mn $, there exists a nested linear code pair $ \mathcal{C} \subsetneqq \mathbb{F}_q^{m \times n} $ with information parameter $ \ell $ and security parameter $ t = \min \{ m,n \} - \left\lceil (\ell / \max \{ m,n \} ) \right\rceil $.
\item
For every positive integer $ t \leq \min \{ m,n \} $, there exists a nested linear code pair $ \mathcal{C} \subsetneqq \mathbb{F}_q^{m \times n} $ with security parameter $ t $ and information parameter $ \ell = \max \{ m,n \} (\min \{ m,n \} - t) $.
\end{enumerate}
\end{theorem}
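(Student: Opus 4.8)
The plan is to pass to the dual code. By the concluding statement of Theorem~\ref{theorem worst case leakage}, a nested linear code pair $ \mathcal{C} \subsetneqq \mathbb{F}_q^{m \times n} $ has security parameter $ t = d_R(\mathcal{C}^\perp) - 1 $, while its information parameter is $ \ell = \dim(\mathcal{C}^\perp) $ by Definition~\ref{definition security parameter}. Writing $ \mathcal{D} = \mathcal{C}^\perp $ (equivalently $ \mathcal{C} = \mathcal{D}^\perp $), both items reduce to producing a linear code $ \mathcal{D} \subseteq \mathbb{F}_q^{m \times n} $ of prescribed dimension and prescribed minimum rank distance $ d_R(\mathcal{D}) $. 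Throughout I would use freely the existence of linear codes in $ \mathbb{F}_q^{m \times n} $ meeting the Singleton bound $ \dim(\mathcal{D}) = \max \{ m,n \} (\min \{ m,n \} - d_R(\mathcal{D}) + 1) $ for every admissible minimum rank distance $ 1 \leq d_R(\mathcal{D}) \leq \min \{ m,n \} $, namely \cite[Theorem 6.3]{delsartebilinear}.

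Item 2 is then immediate: given $ t $ with $ 1 \leq t \leq \min \{ m,n \} - 1 $, I would choose $ \mathcal{D} $ meeting the Singleton bound with $ d_R(\mathcal{D}) = t + 1 $. Its dimension equals $ \max \{ m,n \} (\min \{ m,n \} - t) $, which is exactly the desired $ \ell $, and $ \mathcal{C} = \mathcal{D}^\perp $ has security parameter $ d_R(\mathcal{D}) - 1 = t $.

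For item 1 the obstacle is that the target dimension $ \ell $ need not be a multiple of $ \max \{ m,n \} $, so a Singleton-optimal code of dimension exactly $ \ell $ typically does not exist; I would obtain $ \mathcal{D} $ instead as a subcode. Put $ a = \lceil \ell / \max \{ m,n \} \rceil $; since $ 1 \leq \ell \leq mn $, one checks $ 1 \leq a \leq \min \{ m,n \} $, so $ \min \{ m,n \} - a + 1 $ is an admissible minimum rank distance. I would take a Singleton-optimal code $ \mathcal{D}^\prime \subseteq \mathbb{F}_q^{m \times n} $ with $ d_R(\mathcal{D}^\prime) = \min \{ m,n \} - a + 1 $; it has dimension $ a \max \{ m,n \} \geq \ell $, so I may choose a subspace $ \mathcal{D} \subseteq \mathcal{D}^\prime $ with $ \dim(\mathcal{D}) = \ell $. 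The crux is to pin down $ d_R(\mathcal{D}) $ exactly, which I would do by a two-sided squeeze: every nonzero matrix of $ \mathcal{D} $ also lies in $ \mathcal{D}^\prime $, whence $ d_R(\mathcal{D}) \geq d_R(\mathcal{D}^\prime) = \min \{ m,n \} - a + 1 $; and the Singleton bound applied to $ \mathcal{D} $ gives $ d_R(\mathcal{D}) \leq \min \{ m,n \} - \lceil \ell / \max \{ m,n \} \rceil + 1 = \min \{ m,n \} - a + 1 $. Hence $ d_R(\mathcal{D}) = \min \{ m,n \} - a + 1 $, and $ \mathcal{C} = \mathcal{D}^\perp $ has information parameter $ \ell $ and security parameter $ d_R(\mathcal{D}) - 1 = \min \{ m,n \} - \lceil \ell / \max \{ m,n \} \rceil $, as required.

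The main obstacle is precisely this exact evaluation of the minimum rank distance of the subcode in item 1; the remaining steps are routine dimension counts together with the dual dictionary $ \ell = \dim(\mathcal{C}^\perp) $ and $ t = d_R(\mathcal{C}^\perp) - 1 $. I would also remark on the degenerate endpoint $ t = \min \{ m,n \} $ of item 2, where the formula forces $ \ell = 0 $ and the nested pair collapses to $ \mathcal{C} = \mathbb{F}_q^{m \times n} $; this transmits no information and is ruled out by the strict inclusion, so it is best handled as a boundary convention rather than by the construction above.
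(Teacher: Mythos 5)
Your proposal is correct and takes essentially the same route the paper indicates: the conference version omits the detailed proof (deferring to the extended version \cite{unifying}), but the sentence preceding the theorem makes clear it rests on Delsarte's existence of Singleton-optimal codes for all admissible minimum rank distances \cite[Theorem 6.3]{delsartebilinear}, combined with the dictionary $ \ell = \dim(\mathcal{C}^\perp) $ and $ t = d_R(\mathcal{C}^\perp) - 1 $ from Theorem~\ref{theorem worst case leakage}, exactly as you use it. Your subcode-plus-Singleton squeeze to pin down $ d_R $ exactly in item 1, and your remark that the endpoint $ t = \min\{m,n\} $ in item 2 forces $ \ell = 0 $ and is only a boundary convention, are the correct ways to fill in the details the paper leaves implicit.
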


\begin{remark}
We remark here that, to the best of our knowledge, only the linear coset coding schemes in item 2 in the previous theorem, for the special case $ n \leq m $, have been obtained in the literature. It corresponds to \cite[Theorem 7]{silva-universal}.

Using cartesian products of MRD codes as in \cite[Subsection VII-C]{silva-universal}, linear coset coding schemes as in item 2 in the previous theorem can be obtained when $ n > m $, for the restricted parameters $ n = lm $ and $ \ell = m l k^\prime $, where $ l $ and $ k^\prime < m $ are positive integers.

Therefore, the previous theorem completes the search for linear coset coding schemes with optimal security parameters for noiseless networks.
\end{remark}

\section{Universal secure list-decodable rank-metric linear coset coding schemes}

In this section, we will show how to build a nested linear code pair $ \mathcal{C}_2 \subsetneqq \mathcal{C}_1 \subseteq \mathbb{F}_q^{m \times n} $ that can be used to list-decode rank errors, which naturally appear on the network, whose list sizes are polynomial on the code length $ n $, while being univeral secure under a given number of wire-tapped links. We will also compare the obtained parameters with those obtained when choosing $ \mathcal{C}_1 $ and $ \mathcal{C}_2 $ as Gabidulin maximum rank distance (MRD) codes \cite{gabidulin}.

\subsection{Linear coset coding schemes using Gabidulin MRD codes}

Assume that $ n \leq m $ and $ \mathcal{C}_2 \subsetneqq \mathcal{C}_1 \subseteq \mathbb{F}_q^{m \times n} $ are MRD linear codes (such as Gabidulin codes \cite{gabidulin}) of dimensions $ \dim(\mathcal{C}_1) = m k_1 $ and $ \dim(\mathcal{C}_2) = m k_2 $.

The linear coset coding scheme constructed from this nested linear code pair satisfies the following properties:
\begin{enumerate}
\item
The information parameter is $ \ell = m(k_1 - k_2) $.
\item
The security parameter is $ t = k_2 $.
\item
If the number of rank errors is $ e \leq \lfloor \frac{n - k_1}{2} \rfloor $, then rank error-correction can be performed, giving a unique solution.
\end{enumerate}

\subsection{List-decodable linear coset coding schemes for the rank metric}

Assume now that $ n $ divides $ m $. For the same positive integers $ 1 \leq k_2 < k_1 \leq n $ as in the previous subsection, and for fixed $ \varepsilon > 0 $ and positive integer $ s $, we may construct a linear coset coding scheme from a nested linear code pair $ \mathcal{C}_2 \subsetneqq \mathcal{C}_1 \subseteq \mathbb{F}_q^{m \times n} $, with the following properties:
\begin{enumerate}
\item
The information parameter is $ \ell \geq m(k_1 - k_2)(1 - 2 \varepsilon) $.
\item
The security parameter is $ t \geq k_2 $.
\item
If the number of rank errors is $ e \leq \frac{s}{s+1}(n-k_1) $, then rank-metric list-decoding allows to obtain in polynomial time a list (of uncoded secret messages) of size $ q^{O(s^2/\varepsilon^2)} $, which is polynomial in the code length $ n $.
\end{enumerate}

Therefore, we may obtain the same security performance as in the previous subsection, an information parameter that is at least $ 1 - 2 \varepsilon $ times the one in the previous subsection, and can list-decode in polynomial time (with list of polynomial size) roughly $ n - k_1 $ errors, which is twice as many as in the previous subsection.

Now we show the construction. Fix a basis $ \alpha_1, \alpha_2, \ldots, \alpha_m $ of $ \mathbb{F}_{q^m} $ as a vector space over $ \mathbb{F}_q $, such that $ \alpha_1, \alpha_2, \ldots, \alpha_n $ generate $ \mathbb{F}_{q^n} $. We define the matrix representation map $ M_{\boldsymbol\alpha} : \mathbb{F}_{q^m}^n \longrightarrow \mathbb{F}_q^{m \times n} $ associated to the previous basis by
\begin{equation}
M_{\boldsymbol\alpha} (\mathbf{c}) = (c_{i,j})_{1 \leq i \leq m, 1 \leq j \leq n},
\label{equation matrix representation}
\end{equation}
where $ \mathbf{c}_i = (c_{i,1}, c_{i,2}, \ldots, c_{i,n}) \in \mathbb{F}_q^n $, for $ i = 1,2, \ldots, m $, are the unique vectors in $ \mathbb{F}_q^n $ such that $ \mathbf{c} = \sum_{i=1}^m \alpha_i \mathbf{c}_i $. The map $ M_{\boldsymbol\alpha} : \mathbb{F}_{q^m}^n \longrightarrow \mathbb{F}_q^{m \times n} $ is an $ \mathbb{F}_q $-linear vector space isomorphism. 

Recall that a $ q $-linearized polynomial over $ \mathbb{F}_{q^m} $ is a polynomial of the form $ F(x) = \sum_{i=0}^d F_i x^{q^i} $, where $ F_i \in \mathbb{F}_{q^m} $. Denote also $ {\rm ev}_{\boldsymbol\alpha}(F(x)) = (F(\alpha_1), F(\alpha_2), \ldots, F(\alpha_n)) \in \mathbb{F}_{q^m}^n $, and finally define
$$ \mathcal{C}_2 = \{ M_{\boldsymbol\alpha}({\rm ev}_{\boldsymbol\alpha}(F(x))) \mid F_i = 0 \textrm{ for } i < k_1-k_2 \textrm{ and } i \geq k_1 \}, $$
$$ \mathcal{C}_1 = \{ M_{\boldsymbol\alpha}({\rm ev}_{\boldsymbol\alpha}(F(x))) \mid F_i \in \mathcal{H}_i \textrm{ for } 0 \leq i < k_1-k_2, $$
$$ F_i \in \mathbb{F}_{q^m} \textrm{ for } k_1-k_2 \leq i < k_1, F_i = 0 \textrm{ for } i \geq k_1 \}, $$
where $ \mathcal{H}_0, \mathcal{H}_1, \ldots, \mathcal{H}_{k_1 - k_2 - 1} \subseteq \mathbb{F}_{q^m} $ are the $ \mathbb{F}_q $-linear vector spaces described in \cite[Theorem 8]{list-decodable-rank-metric}. 

A secret message is a vector $ \mathbf{x} \in \mathcal{H}_0 \times \mathcal{H}_1 \times \cdots \times \mathcal{H}_{k_1 - k_2 - 1} $, and the encoding is as follows: choose uniformly at random a $ q $-linearized polynomial $ F(x) = \sum_{i=0}^{k_1-1} F_i x^{q^i} $ over $ \mathbb{F}_{q^m} $ such that $ \mathbf{x} = (F_0, F_1, \ldots, F_{k_1 - k_2 - 1}) $.

Now we prove the previous three items:

\begin{enumerate}
\item
The information parameter $ \ell $ coincides with the dimension of the linear code
$$ \mathcal{W} = \{ M_{\boldsymbol\alpha}({\rm ev}_{\boldsymbol\alpha}(F(x))) \mid F_i \in \mathcal{H}_i \textrm{ for } i < k_1-k_2 $$
$$ \textrm{ and } F_i = 0 \textrm{ for } i \geq k_1-k_2 \}, $$
which is at least $ m (k_1 - k_2) (1 - 2 \varepsilon) $ by \cite[Theorem 8]{list-decodable-rank-metric}, as explained in \cite[page 2713]{list-decodable-rank-metric}.
\item
By Theorem \ref{theorem worst case leakage}, the security parameter is $ t = d_R(\mathcal{C}_2^\perp, \mathcal{C}_1^\perp) - 1 \geq d_R(\mathcal{C}_2^\perp) -1 $. Since $ \mathcal{C}_2 $ is MRD, then so is its trace dual \cite{delsartebilinear}, which means that $ d_R(\mathcal{C}_2^\perp) = k_2 + 1 $, and the result follows.
\item
We first perform list-decoding using the code $ \mathcal{C}_1 $, and obtain in polynomial time a list that is an $ (s-1, m/n, k_1) $-periodic subspace of $ \mathbb{F}_{q^m}^{k_1} $ by \cite[Lemma 16]{list-decodable-rank-metric} (recall the definition of periodic subspace from \cite[Definition 9]{list-decodable-rank-metric}). 

Project this periodic subspace onto the first $ k_1 - k_2 $ coordinates, which still gives a periodic subspace, and intersect it with $ \mathcal{H}_0 \times \mathcal{H}_1 \times \cdots \times \mathcal{H}_{k_1 - k_2 - 1} $. Such intersection is an $ \mathbb{F}_q $-linear affine space of dimension at most $ O(s^2 / \varepsilon^2) $, as in the proof of \cite[Theorem 17]{list-decodable-rank-metric}, and hence the result follows.
\end{enumerate}

\section{Basic properties of RGMWs}

We give now upper and lower Singleton-type bounds on RGMWs of nested linear code pairs:

\begin{theorem}[\textbf{Upper Singleton bound}]
Given nested linear codes $ \mathcal{C}_2 \subsetneqq \mathcal{C}_1 \subseteq \mathbb{F}^{m \times n} $ and $ 1 \leq r \leq \ell = \dim(\mathcal{C}_1/\mathcal{C}_2) $, it holds that
\begin{equation}
d_{M,r}(\mathcal{C}_1,\mathcal{C}_2) \leq n - \left\lceil \frac{\ell - r + 1}{m} \right\rceil + 1.
\label{upper singleton equation}
\end{equation}
In particular, it follows that
$$ \dim(\mathcal{C}_1 / \mathcal{C}_2) \leq \max \{ m,n \}(\min \{ m,n \} - d_R(\mathcal{C}_1,\mathcal{C}_2) + 1). $$
\end{theorem}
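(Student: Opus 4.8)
The plan is to prove \eqref{upper singleton equation} by directly exhibiting a subspace $ \mathcal{L} \subseteq \mathbb{F}^n $ of the asserted dimension that already separates $ \mathcal{C}_1 $ from $ \mathcal{C}_2 $ by at least $ r $ dimensions inside its rank support space, and then to deduce the ``in particular'' inequality from the case $ r = 1 $ together with a transposition argument.

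First I would set $ \mu = n - \left\lceil (\ell - r + 1)/m \right\rceil + 1 $ and check that $ 1 \leq \mu \leq n $, which follows from $ 1 \leq r \leq \ell \leq mn $. Then I would pick any subspace $ \mathcal{L} \subseteq \mathbb{F}^n $ with $ \dim(\mathcal{L}) = \mu $. By \eqref{dimension matrix modules} in Theorem \ref{theorem characterization} we have $ \dim(\mathcal{V}_\mathcal{L}) = m \mu $, and since $ \mathcal{C}_1, \mathcal{V}_\mathcal{L} \subseteq \mathbb{F}^{m \times n} $ with $ \dim(\mathbb{F}^{m \times n}) = mn $, the elementary intersection inequality gives $ \dim(\mathcal{C}_1 \cap \mathcal{V}_\mathcal{L}) \geq \dim(\mathcal{C}_1) + m \mu - mn $. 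Combining this with the trivial bound $ \dim(\mathcal{C}_2 \cap \mathcal{V}_\mathcal{L}) \leq \dim(\mathcal{C}_2) $ and using $ \ell = \dim(\mathcal{C}_1) - \dim(\mathcal{C}_2) $, I obtain
\[
\dim(\mathcal{C}_1 \cap \mathcal{V}_\mathcal{L}) - \dim(\mathcal{C}_2 \cap \mathcal{V}_\mathcal{L}) \geq \ell - m(n - \mu).
\]
The only arithmetic to verify is the identity $ n - \mu = \left\lceil (\ell - r + 1)/m \right\rceil - 1 = \left\lfloor (\ell - r)/m \right\rfloor $, valid for $ \ell - r \geq 0 $, which yields $ m(n - \mu) \leq \ell - r $ and hence that the right-hand side above is at least $ r $. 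Thus $ \mathcal{L} $ meets the constraint in the definition of $ d_{M,r}(\mathcal{C}_1, \mathcal{C}_2) $, so $ d_{M,r}(\mathcal{C}_1, \mathcal{C}_2) \leq \dim(\mathcal{L}) = \mu $, which is exactly \eqref{upper singleton equation}. The same bound also follows from the monotonicity of RDRPs in Proposition \ref{monotonicity RDRP}: starting from $ K_{M,n} = \ell $ and decreasing the index in unit steps that each cost at most $ m $ gives $ K_{M,\mu} \geq \ell - m(n-\mu) $, and $ d_{M,r} $ is the least $ \mu $ with $ K_{M,\mu} \geq r $.

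For the ``in particular'' statement, I would first specialize to $ r = 1 $. Since $ d_{M,1}(\mathcal{C}_1, \mathcal{C}_2) = d_R(\mathcal{C}_1, \mathcal{C}_2) $ by Theorem \ref{theorem as minimum rank weights}, the bound reads $ d_R(\mathcal{C}_1, \mathcal{C}_2) \leq n - \left\lceil \ell / m \right\rceil + 1 $, equivalently $ \left\lceil \ell/m \right\rceil \leq n - d_R(\mathcal{C}_1, \mathcal{C}_2) + 1 $, whence $ \ell \leq m \left\lceil \ell/m \right\rceil \leq m (n - d_R(\mathcal{C}_1, \mathcal{C}_2) + 1) $. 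To obtain the symmetric estimate I would apply the already-proven bound to the transposed pair $ \mathcal{C}_1^T, \mathcal{C}_2^T \subseteq \mathbb{F}^{n \times m} $: both $ \ell $ and the minimum rank distance are invariant under transposition, by the characterization $ d_R = \min \{ {\rm Rk}(C) \mid C \in \mathcal{C}_1, C \notin \mathcal{C}_2 \} $ in Theorem \ref{theorem as minimum rank weights} and $ {\rm Rk}(C) = {\rm Rk}(C^T) $, so this yields $ \ell \leq n(m - d_R(\mathcal{C}_1, \mathcal{C}_2) + 1) $. Taking the smaller of the two bounds gives $ \ell \leq \max\{ m,n \}(\min\{ m,n \} - d_R(\mathcal{C}_1, \mathcal{C}_2) + 1) $, since the difference of the two right-hand sides equals $ (n-m)(d_R(\mathcal{C}_1,\mathcal{C}_2) - 1) $, so the bound that is $ \max $ times $ \min $ is always the tighter one.

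I expect the main obstacle to be purely bookkeeping: verifying the floor/ceiling identity, checking the edge cases $ \mu = 1 $ and $ \mu = n $ so that a subspace $ \mathcal{L} $ of dimension $ \mu $ genuinely exists, and making sure the transposition step is legitimate, i.e.\ that passing to $ \mathbb{F}^{n\times m} $ leaves both $ \ell $ and $ d_R $ unchanged. No idea deeper than the dimension formula $ \dim(\mathcal{V}_\mathcal{L}) = m\dim(\mathcal{L}) $ and the intersection inequality is required.
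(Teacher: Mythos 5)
Your proof is correct, but it takes a genuinely different route from the paper's. The paper proves the bound (\ref{upper singleton equation}) ``from the top down'': it starts from $d_{M,\ell}(\mathcal{C}_1,\mathcal{C}_2) \leq n$ and iterates the second monotonicity property of Proposition \ref{monotonicity of RGMW}, namely $d_{M,r}(\mathcal{C}_1,\mathcal{C}_2) + 1 \leq d_{M,r+m}(\mathcal{C}_1,\mathcal{C}_2)$, applied $h$ times with $h$ maximal such that $r + hm \leq \ell$, to get $m\, d_{M,r}(\mathcal{C}_1,\mathcal{C}_2) \leq mn - \ell + r + m - 1$, which is equivalent to (\ref{upper singleton equation}) after rounding. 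You instead argue ``from the bottom up'': you exhibit an explicit witness, showing that \emph{every} subspace $\mathcal{L} \subseteq \mathbb{F}^n$ of dimension $\mu = n - \left\lceil (\ell-r+1)/m \right\rceil + 1$ already satisfies $\dim(\mathcal{C}_1 \cap \mathcal{V}_\mathcal{L}) - \dim(\mathcal{C}_2 \cap \mathcal{V}_\mathcal{L}) \geq \ell - m(n-\mu) \geq r$, using only the dimension formula (\ref{dimension matrix modules}) and the elementary inequality $\dim(A \cap B) \geq \dim(A) + \dim(B) - mn$; your floor/ceiling bookkeeping checks out since $\lceil(\ell-r+1)/m\rceil - 1 = \lfloor(\ell-r)/m\rfloor$ for $r \leq \ell$. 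Your route is more self-contained (the paper's relies on Proposition \ref{monotonicity of RGMW}, whose proof is deferred to the extended version), and it yields the stronger qualitative fact that any subspace of the critical dimension works; the paper's route is shorter given the structural properties already stated, and your parenthetical remark via Proposition \ref{monotonicity RDRP} is essentially a dual formulation of it. Note also that the paper's printed proof only establishes (\ref{upper singleton equation}) and leaves the ``in particular'' claim implicit, whereas you supply a complete derivation: the case $r=1$ plus $d_{M,1} = d_R$ from Theorem \ref{theorem as minimum rank weights} gives $\ell \leq m(n - d_R(\mathcal{C}_1,\mathcal{C}_2) + 1)$, transposing (legitimate since transposition preserves dimensions and rank) gives $\ell \leq n(m - d_R(\mathcal{C}_1,\mathcal{C}_2) + 1)$, and your identity $m(n-d+1) - n(m-d+1) = (n-m)(d-1)$ correctly identifies $\max\{m,n\}(\min\{m,n\} - d_R(\mathcal{C}_1,\mathcal{C}_2) + 1)$ as the tighter of the two, completing the claim.
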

\begin{proof}
First, we have that $ d_{M,\ell}(\mathcal{C}_1,\mathcal{C}_2) \leq n $ by definition. For the general case, it is enough to prove that $ m d_{M,r}(\mathcal{C}_1,\mathcal{C}_2) \leq mn - \ell + r + m-1 $. Assume that $ 1 \leq r \leq \ell - hm $, where the integer $ h \geq 0 $ is the maximum possible. That is, $ r + (h+1)m > \ell $. Using Proposition \ref{monotonicity of RGMW}, we obtain
$$ m d_{M,r}(\mathcal{C}_1,\mathcal{C}_2) \leq m d_{M,r + hm}(\mathcal{C}_1,\mathcal{C}_2) - hm $$
$$ \leq m d_{M, \ell}(\mathcal{C}_1,\mathcal{C}_2) - hm \leq mn - \ell + r + m - 1, $$
where the last inequality follows from $ m d_{M,\ell}(\mathcal{C}_1,\mathcal{C}_2) \leq mn $ and $ r + (h+1)m - 1 \geq \ell $.
\end{proof}

\begin{theorem}[\textbf{Lower Singleton bound}] \label{lower singleton bound}
Given nested linear codes $ \mathcal{C}_2 \subsetneqq \mathcal{C}_1 \subseteq \mathbb{F}^{m \times n} $ and $ 1 \leq r \leq \dim(\mathcal{C}_1/\mathcal{C}_2) $, it holds that $ m d_{M,r}(\mathcal{C}_1,\mathcal{C}_2) \geq r $, which implies that
\begin{equation}
d_{M,r}(\mathcal{C}_1,\mathcal{C}_2) \geq \left\lceil \frac{r}{m} \right\rceil.
\label{lower singleton equation}
\end{equation}
\end{theorem}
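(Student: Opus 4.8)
The plan is to bound the rank weight of any eligible subspace $\mathcal{D}$ from below directly, using the characterization of $d_{M,r}$ from Theorem \ref{theorem as minimum rank weights} together with the dimension formula \eqref{dimension matrix modules}. By that theorem, $d_{M,r}(\mathcal{C}_1,\mathcal{C}_2) = \min \{ {\rm wt_R}(\mathcal{D}) \}$, the minimum taken over subspaces $\mathcal{D} \subseteq \mathcal{C}_1$ with $\mathcal{D} \cap \mathcal{C}_2 = \{0\}$ and $\dim(\mathcal{D}) = r$. So it suffices to show that every such $\mathcal{D}$ satisfies $m \, {\rm wt_R}(\mathcal{D}) \geq r$, since taking the minimum over $\mathcal{D}$ then yields $m \, d_{M,r}(\mathcal{C}_1,\mathcal{C}_2) \geq r$, from which \eqref{lower singleton equation} follows immediately (as $d_{M,r}$ is an integer, $d_{M,r} \geq r/m$ forces $d_{M,r} \geq \lceil r/m \rceil$).

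First I would fix an eligible $\mathcal{D}$ and set $\mathcal{L} = {\rm RSupp}(\mathcal{D})$, so that ${\rm wt_R}(\mathcal{D}) = \dim(\mathcal{L})$ by definition of the rank weight. By the very definition of the rank support, $\mathcal{D} \subseteq \mathcal{V}_\mathcal{L}$, since every $C \in \mathcal{D}$ has ${\rm Row}(C) \subseteq \mathcal{L}$. Hence $r = \dim(\mathcal{D}) \leq \dim(\mathcal{V}_\mathcal{L})$. Now I would apply the dimension formula \eqref{dimension matrix modules} from Theorem \ref{theorem characterization}, which gives $\dim(\mathcal{V}_\mathcal{L}) = m \dim(\mathcal{L})$. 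Combining these two facts yields
\[
r \leq \dim(\mathcal{V}_\mathcal{L}) = m \dim(\mathcal{L}) = m \, {\rm wt_R}(\mathcal{D}),
\]
which is exactly the desired inequality for this $\mathcal{D}$.

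Since the chain above holds for every eligible $\mathcal{D}$, it holds in particular for one achieving the minimum rank weight, giving $m \, d_{M,r}(\mathcal{C}_1,\mathcal{C}_2) \geq r$ and therefore $d_{M,r}(\mathcal{C}_1,\mathcal{C}_2) \geq \lceil r/m \rceil$. I do not anticipate a serious obstacle here: the argument is essentially a one-line dimension count once the right two ingredients are in place. The only point requiring mild care is invoking Theorem \ref{theorem as minimum rank weights} to reduce the statement to a bound on individual subspaces $\mathcal{D}$, rather than working with the profile definition directly — but even working straight from the RGMW definition the same estimate goes through, since any $\mathcal{L}$ with $\dim(\mathcal{C}_1 \cap \mathcal{V}_\mathcal{L}) - \dim(\mathcal{C}_2 \cap \mathcal{V}_\mathcal{L}) \geq r$ must have $\dim(\mathcal{V}_\mathcal{L}) \geq r$, whence $m \dim(\mathcal{L}) \geq r$ by \eqref{dimension matrix modules}.
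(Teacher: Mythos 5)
Your proof is correct and follows essentially the same route as the paper's: both set $\mathcal{L} = {\rm RSupp}(\mathcal{D})$, use the dimension formula (\ref{dimension matrix modules}) to get $m\,{\rm wt_R}(\mathcal{D}) = \dim(\mathcal{V}_\mathcal{L}) \geq \dim(\mathcal{D})$, and conclude via Theorem \ref{theorem as minimum rank weights}. Your write-up is just a more detailed spelling-out of the paper's one-line dimension count.
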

\begin{proof}
Take a subspace $ \mathcal{D} \subseteq \mathbb{F}^{m \times n} $ and define $ \mathcal{L} = {\rm RSupp}(\mathcal{D}) $. Using (\ref{dimension matrix modules}), we see that
$$ m {\rm wt_R}(\mathcal{D}) = m \dim(\mathcal{L}) = \dim(\mathcal{V}_\mathcal{L}) \geq \dim(\mathcal{D}). $$
The result follows from this and Theorem \ref{theorem as minimum rank weights}.
\end{proof}

On the other hand, it is well-known that, in the Hamming case, all generalized Hamming weights of a linear code determine uniquely those of the corresponding dual code. This is known as Wei's Duality Theorem \cite[Theorem 3]{wei}. Next we give an analogous result for the generalized matrix weights of a linear code $ \mathcal{C} \subseteq \mathbb{F}^{m \times n} $ and its dual $ \mathcal{C}^\perp $. See \cite{unifying} for a proof.

\begin{theorem} [\textbf{Duality theorem}] \label{duality theorem}
Given a linear code $ \mathcal{C} \subseteq \mathbb{F}^{m \times n} $ with $ k = \dim(\mathcal{C}) $, and given an integer $ p \in \mathbb{Z} $, define
\begin{equation*}
\begin{split}
W_p(\mathcal{C}) = & \{ d_{M,p + rm}(\mathcal{C}) \mid r \in \mathbb{Z}, 1 \leq p+rm \leq k \}, \\
\overline{W}_p(\mathcal{C}) = & \{ n + 1 - d_{M,p + rm}(\mathcal{C}) \mid r \in \mathbb{Z}, 1 \leq p+rm \leq k \}.
\end{split}
\end{equation*}
Then it holds that
$$ \{ 1,2, \ldots, n \} = W_p(\mathcal{C}^\perp) \cup \overline{W}_{p + k}(\mathcal{C}), $$
where the union is disjoint.
\end{theorem}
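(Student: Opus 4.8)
The plan is to reduce the statement to a clean duality relation between the \emph{profile functions} of $\mathcal{C}$ and $\mathcal{C}^\perp$, and then read off the partition of $\{1,\ldots,n\}$ by a short counting argument on residue classes modulo $m$. Throughout I write $K_{M,\mu}(\mathcal{C}) = K_{M,\mu}(\mathcal{C},\{0\})$ for the single-code RDRP. I use the elementary fact that, since $d_{M,r}(\mathcal{C})$ is nondecreasing in $r$ by Proposition \ref{monotonicity of RGMW}, the two quantities form a Galois connection $d_{M,r}(\mathcal{C}) \le \mu \Leftrightarrow K_{M,\mu}(\mathcal{C}) \ge r$; consequently an index $s$ in the valid range satisfies $d_{M,s}(\mathcal{C}) = v$ exactly when $K_{M,v-1}(\mathcal{C}) < s \le K_{M,v}(\mathcal{C})$.

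First I would establish the pointwise dimension identity
$$\dim(\mathcal{C}\cap\mathcal{V}_\mathcal{L}) = k + m\dim(\mathcal{L}) - mn + \dim(\mathcal{C}^\perp \cap \mathcal{V}_{\mathcal{L}^\perp})$$
for every subspace $\mathcal{L} \subseteq \mathbb{F}^n$. This combines the standard identity $\dim(U\cap W) = \dim U + \dim W - mn + \dim(U^\perp \cap W^\perp)$ in $\mathbb{F}^{m\times n}$ with Proposition \ref{dual rank support} (which gives $(\mathcal{V}_\mathcal{L})^\perp = \mathcal{V}_{\mathcal{L}^\perp}$) and the dimension formula (\ref{dimension matrix modules}). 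Taking the maximum over all $\mathcal{L}$ of dimension exactly $\mu$, using that $\mathcal{L}\mapsto\mathcal{L}^\perp$ is a bijection from the dimension-$\mu$ onto the dimension-$(n-\mu)$ subspaces, and noting that $\dim(\mathcal{C}\cap\mathcal{V}_\mathcal{L})$ is monotone in $\mathcal{L}$ (so the profile is attained at full dimension $\mu$), yields the \textbf{profile duality relation}
$$K_{M,\mu}(\mathcal{C}) = K_{M,n-\mu}(\mathcal{C}^\perp) + k - m(n-\mu), \qquad 0 \le \mu \le n.$$

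With this in hand, I would fix $v \in \{1,\ldots,n\}$ and test its membership in the two sets. Writing $A = K_{M,v-1}(\mathcal{C}^\perp)$ and $B = K_{M,v}(\mathcal{C}^\perp)$, the characterization above shows that $v \in W_p(\mathcal{C}^\perp)$ precisely when the interval of integers $(A,B]$ contains one $\equiv p \pmod{m}$; here the index constraint $1 \le s \le mn-k$ is automatic since $0 \le A$ and $B \le \dim(\mathcal{C}^\perp) = mn-k$. For the reflected set, put $w = n+1-v$: then $v \in \overline{W}_{p+k}(\mathcal{C})$ iff $w$ is a weight of $\mathcal{C}$ at some index $\equiv p+k \pmod{m}$, and the profile duality relation, applied at $w$ and $w-1$ (note $n-w = v-1$), rewrites $(K_{M,w-1}(\mathcal{C}),K_{M,w}(\mathcal{C})]$ as $(B + (k-mv),\, A + m + (k-mv)]$, with the analogous index constraint $1 \le s \le k$ again automatic. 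Shifting by $-(k-mv)$ and using $k - mv \equiv k \pmod{m}$, the residue-$(p+k)$ condition becomes a residue-$p$ condition on the interval $(B, A+m]$.

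The key final observation is that $(A,B] \sqcup (B,A+m] = (A,A+m]$ is, by the profile monotonicity $B - A \le m$ from Proposition \ref{monotonicity RDRP}, a genuine disjoint decomposition into a window of exactly $m$ consecutive integers, which therefore contains exactly one integer in each residue class modulo $m$, in particular exactly one $\equiv p$. Hence for each $v$ precisely one of the two membership conditions holds, giving disjointness and the covering of $\{1,\ldots,n\}$ simultaneously. The main obstacle is getting the profile duality relation exactly right --- the bookkeeping of the constants $k$ and $m(n-\mu)$, and the reindexing $w = n+1-v$ that converts the residue class $p+k$ for $\mathcal{C}$ into the residue class $p$ for $\mathcal{C}^\perp$ --- since the entire partition hinges on the two residue-$p$ intervals fitting together into a single window of length $m$.
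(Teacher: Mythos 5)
Your proof is correct, and every step holds up under scrutiny: the equivalence $d_{M,s}(\mathcal{C})=v \Longleftrightarrow K_{M,v-1}(\mathcal{C}) < s \le K_{M,v}(\mathcal{C})$ is immediate from the definitions; the pointwise identity $\dim(\mathcal{C}\cap\mathcal{V}_{\mathcal{L}}) = k + m\dim(\mathcal{L}) - mn + \dim(\mathcal{C}^{\perp}\cap\mathcal{V}_{\mathcal{L}^{\perp}})$ follows from the standard formula $\dim(U\cap W)=\dim U+\dim W-mn+\dim(U^{\perp}\cap W^{\perp})$ together with Proposition~\ref{dual rank support} and~(\ref{dimension matrix modules}); passing to maxima over $\dim(\mathcal{L})=\mu$ is legitimate because $\dim(\mathcal{C}\cap\mathcal{V}_{\mathcal{L}})$ is monotone under enlarging $\mathcal{L}$ and $\mathcal{L}\mapsto\mathcal{L}^{\perp}$ is a bijection between the dimension-$\mu$ and the dimension-$(n-\mu)$ subspaces of $\mathbb{F}^n$; the index-range constraints are indeed automatic because $0 \le K_{M,\mu} \le \dim$ for each code; and since $0\le B-A\le m$ by Proposition~\ref{monotonicity RDRP}, the window $(A,A+m]$ contains exactly one integer in residue class $p$ modulo $m$, and that integer lies in exactly one of the disjoint pieces $(A,B]$ and $(B,A+m]$, which yields covering and disjointness of the two sets simultaneously. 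A remark on the comparison: this conference paper contains no proof of Theorem~\ref{duality theorem} at all --- it defers to the extended version \cite{unifying} --- so strictly speaking there is nothing in the text to measure your argument against. The proof in \cite{unifying} rests on the same key lemma you derive, namely the RDRP duality $K_{M,\mu}(\mathcal{C}^{\perp}) = K_{M,n-\mu}(\mathcal{C}) + m\mu - k$, but it is organized in the classical style of Wei's duality theorem \cite{wei}: disjointness of $W_p(\mathcal{C}^{\perp})$ and $\overline{W}_{p+k}(\mathcal{C})$ is established as a separate step, and the partition is then completed by a cardinality count, which needs the strict monotonicity $d_{M,r}(\mathcal{C})+1 \le d_{M,r+m}(\mathcal{C})$ of Proposition~\ref{monotonicity of RGMW} to ensure each set has as many elements as admissible indices. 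Your residue-tiling argument fuses those two steps into a single local statement for each $v\in\{1,\ldots,n\}$, dispensing with both the counting and the injectivity considerations; that is a modest but genuine streamlining, obtained at no extra cost since the underlying duality lemma is identical.
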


\section{Security equivalences of linear coset coding schemes and minimum parameters}

In this section, we study when two nested linear code pairs $ \mathcal{C}_2 \subsetneqq \mathcal{C}_1 \subseteq \mathbb{F}^{m \times n} $ and $ \mathcal{C}_2^\prime \subsetneqq \mathcal{C}_1^\prime \subseteq \mathbb{F}^{m^\prime \times n^\prime} $ have the same security and reliability performance, meaning they perform equally regarding information leakage and error correction.

In this sense, we conclude the section by studying the minimum possible parameters $ m $ and $ n $ for a linear code, which correspond to the packet length and the number of outgoing links from the source node (see Subsection \ref{subsec linear network model}).

A vector space isomorphism preserves full-secrecy thresholds if it is a rank isometry, due to Theorem \ref{theorem worst case leakage}. On the other hand, it completely preserves the security behaviour of linear coset coding schemes if it preserves rank support spaces, in view of Proposition \ref{information leakage calculation}. This motivates the following definitions:

\begin{definition} [\textbf{Rank isometries and security equivalences}]
We say that a vector space isomorphism $ \phi : \mathcal{V} \longrightarrow \mathcal{W} $ between rank support spaces $ \mathcal{V} \in RS(\mathbb{F}^{m \times n}) $ and $ \mathcal{W} \in RS(\mathbb{F}^{m \times n^\prime}) $ is a rank isometry if $ {\rm Rk}(\phi(V)) = {\rm Rk}(V) $, for all $ V \in \mathcal{V} $, and we say that it is a security equivalence if $ \mathcal{U} \subseteq \mathcal{V} $ is a rank support space if, and only if, $ \phi(\mathcal{U}) \subseteq \mathcal{W} $ is a rank support space.

Two nested linear code pairs $ \mathcal{C}_2 \subsetneqq \mathcal{C}_1 \subseteq \mathbb{F}^{m \times n} $ and $ \mathcal{C}^\prime_2 \subsetneqq \mathcal{C}^\prime_1 \subseteq \mathbb{F}^{m \times n^\prime} $ are said to be security equivalent if there exist rank support spaces $ \mathcal{V} \in RS(\mathbb{F}^{m \times n}) $ and $ \mathcal{W} \in RS(\mathbb{F}^{m \times n^\prime}) $, containing $ \mathcal{C}_1 $ and $ \mathcal{C}^\prime_1 $, respectively, and a security equivalence $ \phi : \mathcal{V} \longrightarrow \mathcal{W} $ with $ \phi(\mathcal{C}_1) = \mathcal{C}^\prime_1 $ and $ \phi(\mathcal{C}_2) = \mathcal{C}^\prime_2 $.
\end{definition}

The following result is inspired by \cite[Theorem 5]{similarities}, which treats a particular case. See \cite{unifying} for a proof.

\begin{theorem} \label{security equivalence theorem}
Let $ \phi : \mathcal{V} \longrightarrow \mathcal{W} $ be a vector space isomorphism between rank support spaces $ \mathcal{V} \in RS(\mathbb{F}^{m \times n}) $ and $ \mathcal{W} \in RS(\mathbb{F}^{m \times n^\prime}) $, and consider the following properties:
\begin{itemize}
\item[(P 1)] There exist full-rank matrices $ A \in \mathbb{F}^{m \times m} $ and $ B \in \mathbb{F}^{n \times n^\prime} $ such that $ \phi(C) = ACB $, for all $ C \in\mathcal{V} $.
\item[(P 2)] $ \phi $ is a security equivalence.
\item[(P 3)] For all subspaces $ \mathcal{D} \subseteq \mathcal{V} $, it holds that $ {\rm wt_R}(\phi(\mathcal{D})) = {\rm wt_R}(\mathcal{D}) $.
\item[(P 4)] $ \phi $ is a rank isometry.
\end{itemize}
Then the following implications hold:
$$ (\textrm{P 1}) \Longleftrightarrow (\textrm{P 2}) \Longleftrightarrow (\textrm{P 3}) \Longrightarrow (\textrm{P 4}). $$
In particular, a security equivalence is a rank isometry and, in the case $ \mathcal{V} = \mathcal{W} = \mathbb{F}^{m \times n} $ and $ m \neq n $, the converse holds.
\end{theorem}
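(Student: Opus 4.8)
The plan is to establish $(\textrm{P 1}) \Rightarrow (\textrm{P 3})$, the equivalence $(\textrm{P 2}) \Leftrightarrow (\textrm{P 3})$, the implication $(\textrm{P 3}) \Rightarrow (\textrm{P 4})$, and finally the hard converse $(\textrm{P 3}) \Rightarrow (\textrm{P 1})$; together these yield all the asserted equivalences. The workhorse throughout is an elementary consequence of Theorem~\ref{theorem characterization} and (\ref{dimension matrix modules}): for every subspace $ \mathcal{U} \subseteq \mathbb{F}^{m \times n} $, the space $ \mathcal{V}_{{\rm RSupp}(\mathcal{U})} $ is the smallest rank support space containing $ \mathcal{U} $ and has dimension $ m \cdot {\rm wt_R}(\mathcal{U}) $. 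Hence $ \mathcal{U} \in RS(\mathbb{F}^{m \times n}) $ if and only if $ \dim(\mathcal{U}) = m \cdot {\rm wt_R}(\mathcal{U}) $, and in general $ m \cdot {\rm wt_R}(\mathcal{U}) $ equals the least dimension of a rank support space containing $ \mathcal{U} $.

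For $(\textrm{P 1}) \Rightarrow (\textrm{P 3})$, I would use that left multiplication by the invertible matrix $ A $ preserves row spaces, so $ {\rm RSupp}(\phi(\mathcal{D})) = {\rm RSupp}(\mathcal{D}) B $ for every $ \mathcal{D} \subseteq \mathcal{V} $. Since $ \phi $ is injective and $ A $ is invertible, $ B $ restricts to an injective map on $ \mathcal{L} = {\rm RSupp}(\mathcal{V}) $; therefore it preserves the dimension of each $ {\rm RSupp}(\mathcal{D}) \subseteq \mathcal{L} $, giving $ {\rm wt_R}(\phi(\mathcal{D})) = {\rm wt_R}(\mathcal{D}) $. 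The equivalence $(\textrm{P 2}) \Leftrightarrow (\textrm{P 3})$ then follows from the characterization above: $ \phi $ always preserves dimensions, so under $(\textrm{P 3})$ it preserves the numerical criterion $ \dim(\mathcal{U}) = m \cdot {\rm wt_R}(\mathcal{U}) $ and hence maps rank support spaces to rank support spaces in both directions, i.e.\ $(\textrm{P 2})$; conversely, under $(\textrm{P 2})$ the isomorphism $ \phi $ sends the smallest rank support space containing $ \mathcal{D} $ to the smallest one containing $ \phi(\mathcal{D}) $, preserving its dimension and hence $ {\rm wt_R} $. Specializing $(\textrm{P 3})$ to one-dimensional $ \mathcal{D} = \langle V \rangle $, where $ {\rm wt_R}(\langle V \rangle) = {\rm Rk}(V) $, immediately gives $(\textrm{P 3}) \Rightarrow (\textrm{P 4})$.

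The substantive step is $(\textrm{P 3}) \Rightarrow (\textrm{P 1})$. Writing $ k = \dim(\mathcal{L}) = \dim(\mathcal{L}') $ (equal since $ \dim(\mathcal{V}) = \dim(\mathcal{W}) $), I would fix full-row-rank matrices $ B_{\mathcal{L}} \in \mathbb{F}^{k \times n} $ and $ B_{\mathcal{L}'} \in \mathbb{F}^{k \times n'} $ whose rows are bases of $ \mathcal{L} $ and $ \mathcal{L}' $. The assignment $ V = \widetilde{V} B_{\mathcal{L}} \leftrightarrow \widetilde{V} $ identifies $ \mathcal{V} $ with $ \mathbb{F}^{m \times k} $, and likewise for $ \mathcal{W} $; because $ B_{\mathcal{L}} $ and $ B_{\mathcal{L}'} $ have full row rank, these identifications preserve rank, so $ \phi $ transports to an $ \mathbb{F} $-linear bijection $ \widetilde{\phi} : \mathbb{F}^{m \times k} \longrightarrow \mathbb{F}^{m \times k} $ that preserves the rank of every matrix by $(\textrm{P 4})$ (which we already have from $(\textrm{P 3})$). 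The fundamental theorem of the geometry of matrices then forces $ \widetilde{\phi}(X) = A X D $ with $ A, D $ invertible, or, only in the case $ k = m $, $ \widetilde{\phi}(X) = A X^T D $. The transpose alternative is ruled out by the full strength of $(\textrm{P 3})$: transposition sends row supports to column supports, and for the column-supported rank support space it changes $ {\rm wt_R} $ from $ 1 $ to $ m \geq 2 $, contradicting preservation of rank weight. Hence $ \widetilde{\phi}(X) = A X D $, and undoing the identifications produces a full-rank $ B \in \mathbb{F}^{n \times n'} $ with $ B_{\mathcal{L}} B = D B_{\mathcal{L}'} $ and $ \phi(C) = A C B $, which is $(\textrm{P 1})$. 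The final sentence of the statement is then immediate: $(\textrm{P 2}) \Rightarrow (\textrm{P 4})$ is already proved, and when $ \mathcal{V} = \mathcal{W} = \mathbb{F}^{m \times n} $ with $ m \neq n $ the coordinate width is $ k = n \neq m $, so the transpose alternative never appears and the bare hypothesis $(\textrm{P 4})$ already yields $ \widetilde{\phi}(X) = A X D $, i.e.\ $(\textrm{P 1})$.

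The main obstacle I anticipate is this last direction: invoking the classification of rank-preserving linear bijections of $ \mathbb{F}^{m \times k} $ over an arbitrary field (and correctly handling the small-field and low-dimensional, especially $ k \leq 2 $, edge cases of that classical result), and then rigorously excluding the transpose branch using $(\textrm{P 3})$ rather than merely $(\textrm{P 4})$. A self-contained alternative would replace the citation by the fundamental theorem of projective geometry applied to the dimension-preserving lattice isomorphism $ \mathcal{M} \mapsto {\rm RSupp}(\phi(\mathcal{V}_{\mathcal{M}})) $ induced on the subspaces of $ \mathcal{L} $, recovering a semilinear map whose accompanying field automorphism is forced to be trivial by the $ \mathbb{F} $-linearity of $ \phi $; this route is cleaner conceptually but again needs separate treatment of $ \dim(\mathcal{L}) \leq 2 $.
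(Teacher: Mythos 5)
Your proposal is correct and follows essentially the same route as the paper's own proof (which appears in the extended version \cite{unifying}): the easy implications are handled via the characterization of rank support spaces as the subspaces $\mathcal{U}$ with $\dim(\mathcal{U}) = m\,{\rm wt_R}(\mathcal{U})$ (equivalently, via smallest enclosing rank support spaces and Theorem~\ref{theorem characterization}), and the hard implication $(\textrm{P 3}) \Rightarrow (\textrm{P 1})$ is obtained by identifying $\mathcal{V}_\mathcal{L}$ and $\mathcal{V}_{\mathcal{L}'}$ with $\mathbb{F}^{m \times k}$ through basis matrices of $\mathcal{L}$ and $\mathcal{L}'$, invoking the fundamental theorem of the geometry of rectangular matrices (Hua/Wan) for the transported linear rank isometry, using linearity to kill the field automorphism, and discarding the transpose branch exactly as you do, by exhibiting a rank support space of rank weight $1$ whose image would have rank weight $m$.
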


\begin{remark}
Unfortunately, the implication $ (\textrm{P 3}) \Longleftarrow (\textrm{P 4}) $ not always holds. Take for instance $ m=n $ and the map $ \phi : \mathbb{F}^{m \times m} \longrightarrow \mathbb{F}^{m \times m} $ given by $ \phi(C) = C^T $, for all $ C \in \mathbb{F}^{m \times m} $. 
\end{remark}

The following consequence shows the minimum parameters of a linear code and can be seen as an extension of \cite[Proposition 3]{similarities}.

\begin{proposition} \label{minimum length proposition}
For a linear code $ \mathcal{C} \subseteq \mathbb{F}^{m \times n} $ of dimension $ k $, the following hold:
\begin{enumerate}
\item
There exists a linear code $ \mathcal{C}^\prime \subseteq \mathbb{F}^{m \times n^\prime} $ that is security equivalent to $ \mathcal{C} $ if, and only if, $ n^\prime \geq d_{M,k}(\mathcal{C}) $.
\item
If $ m^\prime \geq d_{M,k}(\mathcal{C}^T) $, then there exists a linear code $ \mathcal{C}^\prime \subseteq \mathbb{F}^{m^\prime \times n} $ that is rank isometric to $ \mathcal{C} $, where 
$$ \mathcal{C}^T = \{ C^T \mid C \in \mathcal{C} \} \subseteq \mathbb{F}^{n \times m}. $$
\end{enumerate}
\end{proposition}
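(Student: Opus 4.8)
The plan is to reduce both items to the single identity $ d_{M,k}(\mathcal{C}) = {\rm wt_R}(\mathcal{C}) $ together with the equivalences of Theorem~\ref{security equivalence theorem}. First I would record the preliminary observation that makes the thresholds in the statement meaningful: since the only $ k $-dimensional subspace of a $ k $-dimensional code $ \mathcal{C} $ is $ \mathcal{C} $ itself, Theorem~\ref{theorem as minimum rank weights} applied with $ \mathcal{C}_1 = \mathcal{C} $, $ \mathcal{C}_2 = \{0\} $ and $ r = k $ gives $ d_{M,k}(\mathcal{C}) = {\rm wt_R}(\mathcal{C}) = \dim({\rm RSupp}(\mathcal{C})) $, and likewise $ d_{M,k}(\mathcal{C}^T) = {\rm wt_R}(\mathcal{C}^T) $. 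For the necessity in item~1, suppose $ \mathcal{C}^\prime \subseteq \mathbb{F}^{m \times n^\prime} $ is security equivalent to $ \mathcal{C} $ through a security equivalence $ \phi : \mathcal{V} \longrightarrow \mathcal{W} $ with $ \phi(\mathcal{C}) = \mathcal{C}^\prime $ and $ \mathcal{W} \in RS(\mathbb{F}^{m \times n^\prime}) $. By the implication (P~2)$ \Rightarrow $(P~3) of Theorem~\ref{security equivalence theorem}, $ \phi $ preserves the rank weight of every subspace, so $ {\rm wt_R}(\mathcal{C}^\prime) = {\rm wt_R}(\mathcal{C}) = d_{M,k}(\mathcal{C}) $; since $ {\rm RSupp}(\mathcal{C}^\prime) \subseteq \mathbb{F}^{n^\prime} $ forces $ {\rm wt_R}(\mathcal{C}^\prime) \leq n^\prime $, the bound $ n^\prime \geq d_{M,k}(\mathcal{C}) $ follows immediately.

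The sufficiency in item~1 is the constructive heart of the proof. Set $ \mathcal{L} = {\rm RSupp}(\mathcal{C}) $, so that $ \mathcal{C} \subseteq \mathcal{V}_\mathcal{L} $ and $ s := \dim(\mathcal{L}) = d_{M,k}(\mathcal{C}) \leq n^\prime $. I would choose a subspace $ \mathcal{L}^\prime \subseteq \mathbb{F}^{n^\prime} $ with $ \dim(\mathcal{L}^\prime) = s $ and a matrix $ B \in \mathbb{F}^{n \times n^\prime} $ whose induced map $ \mathbf{v} \mapsto \mathbf{v}B $ sends a fixed basis of $ \mathcal{L} $ to a basis of $ \mathcal{L}^\prime $; then right multiplication by $ B $ restricts to an isomorphism $ \mathcal{L} \longrightarrow \mathcal{L}^\prime $, and $ \phi(C) = CB $ defines a vector space isomorphism $ \mathcal{V}_\mathcal{L} \longrightarrow \mathcal{V}_{\mathcal{L}^\prime} $, bijectivity being checked row by row and via the dimension count (\ref{dimension matrix modules}). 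By the implication (P~1)$ \Rightarrow $(P~2) of Theorem~\ref{security equivalence theorem}, $ \phi $ is a security equivalence, so $ \mathcal{C}^\prime = \phi(\mathcal{C}) \subseteq \mathbb{F}^{m \times n^\prime} $ is security equivalent to $ \mathcal{C} $, as required.

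For item~2, I would apply item~1 to the transposed code $ \mathcal{C}^T \subseteq \mathbb{F}^{n \times m} $, which again has dimension $ k $, with the roles of rows and columns interchanged: since $ m^\prime \geq d_{M,k}(\mathcal{C}^T) $, item~1 yields a code $ \mathcal{D} \subseteq \mathbb{F}^{n \times m^\prime} $ security equivalent to $ \mathcal{C}^T $, via a map $ \phi $ which, by Theorem~\ref{security equivalence theorem}, is in particular a rank isometry. Taking $ \mathcal{C}^\prime = \mathcal{D}^T \subseteq \mathbb{F}^{m^\prime \times n} $, the composite $ C \mapsto (\phi(C^T))^T $ is a vector space isomorphism $ \mathcal{C} \longrightarrow \mathcal{C}^\prime $ preserving rank, because transposition preserves rank and $ \phi $ is a rank isometry; hence $ \mathcal{C}^\prime $ is rank isometric to $ \mathcal{C} $.

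I expect the main obstacle to be conceptual rather than computational. In item~2 the transpose swaps rows and columns, so the ambient matrix shape changes and the literal rank isometry between rank support spaces of Theorem~\ref{security equivalence theorem}, which keeps the number of rows fixed, does not apply to transposition itself; the key is to read \emph{rank isometric} as related by a rank-preserving linear isomorphism and to use transposition---precisely the map singled out in the Remark as a rank isometry that is not a security equivalence---as the bridge that trades the column threshold of item~1 for the row threshold of item~2. The only other point demanding care is confirming that $ \phi(C) = CB $ in the sufficiency of item~1 is genuinely a bijection between the two rank support spaces, which is exactly where (\ref{dimension matrix modules}) enters.
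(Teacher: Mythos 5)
Your proof is correct and follows what is essentially the intended argument: this conference version defers the proof to the extended version \cite{unifying}, but the ingredients you combine --- the identity $ d_{M,k}(\mathcal{C}) = {\rm wt_R}(\mathcal{C}) $ from Theorem \ref{theorem as minimum rank weights}, the implications (P 2) $\Rightarrow$ (P 3) and (P 1) $\Rightarrow$ (P 2) of Theorem \ref{security equivalence theorem} for the two directions of item 1, and transposition to deduce item 2 from item 1 --- are exactly the ones the paper develops for this purpose, and your reading of ``rank isometric'' in item 2 as the existence of a rank-preserving linear isomorphism is the right one, since the paper's formal definition fixes the number of rows. One point to patch in your sufficiency step of item 1: invoking (P 1) requires exhibiting a \emph{full-rank} matrix $ B \in \mathbb{F}^{n \times n^\prime} $, and a matrix that merely carries a basis of $ \mathcal{L} $ to a basis of $ \mathcal{L}^\prime $ need not have rank $ \min\{n, n^\prime\} $; this is harmless because $ \phi $ depends only on the restriction of $ \mathbf{v} \mapsto \mathbf{v}B $ to $ \mathcal{L} $, and that injective map $ \mathcal{L} \longrightarrow \mathcal{L}^\prime $ can always be extended to a linear map of maximal rank on all of $ \mathbb{F}^n $ (whether $ n \leq n^\prime $ or $ n > n^\prime $), but the full-rank choice should be made explicit.
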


\section{Relation with other existing notions of generalized weights} \label{sec relation with others}

In this section, we study the relation between RGMWs and other notions of generalized weights. In particular, we consider the classical generalized Hamming weights \cite{wei}, relative generalized Hamming weights \cite{luo}, relative generalized rank weights \cite{rgrw, oggier} and Delsarte generalized weights \cite{ravagnaniweights}.

We also compare RDRPs with the relative dimension/length profile from \cite{forney, luo} and the relative dimension/intersection profile from \cite{rgrw}.

\subsection{RGMWs extend relative generalized rank weights} \label{subsec RGMW extend RGRW}

In this subsection, we prove that RGMWs extend the relative generalized rank weights defined in \cite{rgrw}. 

Throughout the subsection, we will consider the extension field $ \mathbb{F}_{q^m} $ of the finite field $ \mathbb{F}_q $, and vector spaces in $ \mathbb{F}_{q^m}^n $ will be considered to be linear over $ \mathbb{F}_{q^m} $. We need first the notion of Galois closed spaces \cite{stichtenoth}:

\begin{definition} [\textbf{Galois closed spaces \cite{stichtenoth}}]
We say that an $ \mathbb{F}_{q^m} $-linear vector space $ \mathcal{V} \subseteq \mathbb{F}_{q^m}^n $ is Galois closed if 
\begin{equation*}
\mathcal{V}^q = \{ (v_1^q, v_2^q, \ldots, v_n^q) \mid (v_1, v_2, \ldots, v_n) \in \mathcal{V} \} \subseteq \mathcal{V}.
\end{equation*}
We denote by $ \Upsilon(\mathbb{F}_{q^m}^n) $ the family of $ \mathbb{F}_{q^m} $-linear Galois closed vector spaces in $ \mathbb{F}_{q^m}^n $.
\end{definition}

\begin{definition} [\textbf{Relative Dimension/Intersection Profile \cite[Definition 1]{rgrw}}]
Given nested $ \mathbb{F}_{q^m} $-linear codes $ \mathcal{C}_2 \subsetneqq \mathcal{C}_1 \subseteq \mathbb{F}_{q^m}^n $, and $ 0 \leq \mu \leq n $, we define their $ \mu $-th relative dimension/intersection profile (RDIP) as
\begin{equation*}
\begin{split}
K_{R,\mu}(\mathcal{C}_1, \mathcal{C}_2) = \max \{ & \dim(\mathcal{C}_1 \cap \mathcal{V}) - \dim(\mathcal{C}_2 \cap \mathcal{V}) \mid \\
 & \mathcal{V} \in \Upsilon(\mathbb{F}_{q^m}^n), \dim(\mathcal{V}) \leq \mu \},
\end{split}
\end{equation*}
where dimensions are taken over $ \mathbb{F}_{q^m} $.
\end{definition}

\begin{definition} [\textbf{Relative Generalized Rank Weigths \cite[Definition 2]{rgrw}}]
Given nested $ \mathbb{F}_{q^m}$-linear codes $ \mathcal{C}_2 \subsetneqq \mathcal{C}_1 \subseteq \mathbb{F}_{q^m}^n $, and $ 1 \leq r \leq \ell = \dim(\mathcal{C}_1 / \mathcal{C}_2) $ (over $ \mathbb{F}_{q^m} $), we define their $ r $-th relative generalized rank weight (RGRW) as
\begin{equation*}
\begin{split}
d_{R,r}(\mathcal{C}_1, \mathcal{C}_2) = \min \{ & \dim(\mathcal{V}) \mid \mathcal{V} \in \Upsilon(\mathbb{F}_{q^m}^n), \\
 & \dim(\mathcal{C}_1 \cap \mathcal{V}) - \dim(\mathcal{C}_2 \cap \mathcal{V}) \geq r \},
\end{split}
\end{equation*}
where dimensions are taken over $ \mathbb{F}_{q^m} $.
\end{definition}

We may now show the following characterization. Recall the matrix representation map from (\ref{equation matrix representation}).

\begin{theorem} \label{theorem matrix modules are galois}
Let $ \alpha_1, \alpha_2, \ldots, \alpha_m $ be a basis of $ \mathbb{F}_{q^m} $ as a vector space over $ \mathbb{F}_q $, and let $ \mathcal{V} \subseteq \mathbb{F}_{q^m}^n $ be an arbitrary set. The following are equivalent:
\begin{enumerate}
\item
$ \mathcal{V} \subseteq \mathbb{F}_{q^m}^n $ is an $ \mathbb{F}_{q^m} $-linear Galois closed vector space.
\item
$ M_{\boldsymbol\alpha}(\mathcal{V}) \subseteq \mathbb{F}_q^{m \times n} $ is a rank support space. 
\end{enumerate}
Moreover, if $ M_{\boldsymbol\alpha}(\mathcal{V}) = \mathcal{V}_\mathcal{L} $ for a subspace $ \mathcal{L} \subseteq \mathbb{F}^n $, then
$$ \dim(\mathcal{V}) = \dim(\mathcal{L}), $$
where $ \dim(\mathcal{V}) $ is taken over $ \mathbb{F}_{q^m} $.
\end{theorem}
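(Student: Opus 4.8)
The plan is to show that both conditions (1) and (2) are equivalent to the single structural statement that $\mathcal{V}$ is an $\mathbb{F}_{q^m}$-linear space admitting a basis of vectors lying in $\mathbb{F}_q^n$. The bridge is the defining decomposition of the matrix representation map (\ref{equation matrix representation}): if $\mathbf{c} = \sum_{i=1}^m \alpha_i \mathbf{c}_i$ with $\mathbf{c}_i \in \mathbb{F}_q^n$, then the $i$-th row of $M_{\boldsymbol\alpha}(\mathbf{c})$ is precisely $\mathbf{c}_i$, so $M_{\boldsymbol\alpha}(\mathbf{c})$ has row space equal to $\langle \mathbf{c}_1, \ldots, \mathbf{c}_m \rangle$ over $\mathbb{F}_q$. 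I will invoke the standard characterization of Galois closed spaces from \cite{stichtenoth}: an $\mathbb{F}_{q^m}$-linear space $\mathcal{V} \subseteq \mathbb{F}_{q^m}^n$ is Galois closed if and only if $\mathcal{V} = \langle \mathcal{L} \rangle_{\mathbb{F}_{q^m}}$ for some $\mathbb{F}_q$-linear subspace $\mathcal{L} \subseteq \mathbb{F}_q^n$, in which case one may take $\mathcal{L} = \mathcal{V} \cap \mathbb{F}_q^n$.

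For the implication $(1) \Rightarrow (2)$, write $\mathcal{V} = \langle \mathcal{L} \rangle_{\mathbb{F}_{q^m}}$ with $\mathbf{b}_1, \ldots, \mathbf{b}_k$ an $\mathbb{F}_q$-basis of $\mathcal{L} \subseteq \mathbb{F}_q^n$. An arbitrary element of $\mathcal{V}$ has the form $\mathbf{c} = \sum_{j=1}^k \lambda_j \mathbf{b}_j$ with $\lambda_j \in \mathbb{F}_{q^m}$; expanding each $\lambda_j = \sum_{i=1}^m \alpha_i \lambda_{i,j}$ over $\mathbb{F}_q$ and collecting the $\alpha_i$ shows that the $i$-th coefficient vector is $\mathbf{c}_i = \sum_{j} \lambda_{i,j} \mathbf{b}_j$, which ranges over all of $\mathcal{L}$ and does so \emph{independently} across $i$. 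Hence $M_{\boldsymbol\alpha}(\mathcal{V})$ is exactly the set of matrices whose every row lies in $\mathcal{L}$, that is $M_{\boldsymbol\alpha}(\mathcal{V}) = \mathcal{V}_\mathcal{L}$, a rank support space. For $(2) \Rightarrow (1)$, suppose $M_{\boldsymbol\alpha}(\mathcal{V}) = \mathcal{V}_\mathcal{L}$ and run the same computation backwards: the set of $\mathbf{c}$ whose coefficient vectors all lie in $\mathcal{L}$ is precisely $\langle \mathcal{L} \rangle_{\mathbb{F}_{q^m}}$, so $\mathcal{V} = \langle \mathcal{L} \rangle_{\mathbb{F}_{q^m}}$ is $\mathbb{F}_{q^m}$-linear and, being spanned by vectors in $\mathbb{F}_q^n$ (which are fixed by the componentwise Frobenius $x \mapsto x^q$), Galois closed.

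For the dimension claim, I would use that $\mathbb{F}_q$-linearly independent vectors in $\mathbb{F}_q^n$ remain linearly independent over $\mathbb{F}_{q^m}$ (ranks are computed by nonvanishing of minors, which are insensitive to the field extension). Thus an $\mathbb{F}_q$-basis $\mathbf{b}_1, \ldots, \mathbf{b}_k$ of $\mathcal{L}$ is simultaneously an $\mathbb{F}_{q^m}$-basis of $\mathcal{V} = \langle \mathcal{L} \rangle_{\mathbb{F}_{q^m}}$, giving $\dim_{\mathbb{F}_{q^m}}(\mathcal{V}) = k = \dim_{\mathbb{F}_q}(\mathcal{L})$. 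The main point requiring care is the verification that the coefficient vectors $\mathbf{c}_i$ vary independently over $\mathcal{L}$ as $\mathbf{c}$ ranges over $\mathcal{V}$ --- equivalently, that the single $\mathbb{F}_{q^m}$-span $\langle \mathcal{L} \rangle_{\mathbb{F}_{q^m}}$ unfolds under $M_{\boldsymbol\alpha}$ into the full row-by-row structure defining $\mathcal{V}_\mathcal{L}$; this is exactly where the freedom of the scalars $\lambda_{i,j} \in \mathbb{F}_q$ and the linear independence of $\alpha_1, \ldots, \alpha_m$ over $\mathbb{F}_q$ are used, and it is the hinge connecting both equivalences with the dimension formula.
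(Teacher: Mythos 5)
Your proof is correct. It rests on the same two pillars as the paper's: Stichtenoth's Galois-descent lemma and the observation that $M_{\boldsymbol\alpha}$ converts scalar expansion over the basis $\alpha_1, \ldots, \alpha_m$ into the row structure of a matrix. The difference is in execution. The paper quotes \cite[Lemma 1]{stichtenoth} in the basis form (Galois closed if and only if $\mathcal{V}$ has an $\mathbb{F}_q$-basis $\{\alpha_i \mathbf{b}_j\}$ with $\mathbf{b}_j \in \mathbb{F}_q^n$), applies $M_{\boldsymbol\alpha}$ to that basis to land exactly on the basis $\{B_{i,j}\}$ appearing in item 2 of Theorem \ref{theorem characterization}, and is done in two lines; the dimension claim then falls out of the ``in addition'' part of that theorem. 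You instead use the span form of Galois descent ($\mathcal{V} = \langle \mathcal{L} \rangle_{\mathbb{F}_{q^m}}$ with $\mathcal{L} \subseteq \mathbb{F}_q^n$) and verify the defining property of $\mathcal{V}_\mathcal{L}$ directly, by the element-wise computation showing that the rows $\mathbf{c}_i = \sum_j \lambda_{i,j}\mathbf{b}_j$ sweep out $\mathcal{L}^m$ independently as the scalars $\lambda_j$ range over $\mathbb{F}_{q^m}$ --- which you correctly identify as the hinge, and which is justified by the uniqueness of the expansion $\lambda_j = \sum_i \alpha_i \lambda_{i,j}$. What this buys: your argument bypasses Theorem \ref{theorem characterization} entirely (whose proof the paper defers to \cite{unifying}), establishes the sharper explicit identity $M_{\boldsymbol\alpha}\bigl(\langle \mathcal{L} \rangle_{\mathbb{F}_{q^m}}\bigr) = \mathcal{V}_\mathcal{L}$, and handles the converse direction and Galois closure (via Frobenius fixing $\mathbb{F}_q$-vectors) from first principles; the cost is length. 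Your dimension argument also differs: you use invariance of linear independence under field extension to promote an $\mathbb{F}_q$-basis of $\mathcal{L}$ to an $\mathbb{F}_{q^m}$-basis of $\mathcal{V}$, whereas the paper reads $\dim(\mathcal{L}) = k$ off the correspondence of bases together with equation (\ref{dimension matrix modules}). Both are sound; yours is self-contained, the paper's is shorter because it reuses established machinery.
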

\begin{proof}
For an arbitrary set $ \mathcal{V} \subseteq \mathbb{F}_{q^m}^n $, \cite[Lemma 1]{stichtenoth} states that $ \mathcal{V} $ is an $ \mathbb{F}_{q^m} $-linear Galois closed vector space if, and only if, $ \mathcal{V} $ is $ \mathbb{F}_q $-linear and it has a basis over $ \mathbb{F}_q $ of the form $ \mathbf{v}_{i,j} = \alpha_i \mathbf{b}_j $, for $ i = 1,2, \ldots, m $ and $ j = 1,2, \ldots, k $, where $ \mathbf{b}_1, \mathbf{b}_2, \ldots, \mathbf{b}_k \in \mathbb{F}_q^n $. By considering $ B_{i,j} = M_{\boldsymbol\alpha}(\mathbf{v}_{i,j}) \in \mathbb{F}_q^{m \times n} $, we see that this condition is equivalent to item 2 in Theorem \ref{theorem characterization}, and we are done.
\end{proof}

Therefore, the following result follows:

\begin{corollary} \label{corollary RGMW extend RGRW}
Let $ \alpha_1, \alpha_2, \ldots, \alpha_m $ be a basis of $ \mathbb{F}_{q^m} $ as a vector space over $ \mathbb{F}_q $. Given nested $ \mathbb{F}_{q^m} $-linear codes $ \mathcal{C}_2 \subsetneqq \mathcal{C}_1 \subseteq \mathbb{F}_{q^m}^n $, and integers $ 1 \leq r \leq \ell = \dim(\mathcal{C}_1 / \mathcal{C}_2) $ (over $ \mathbb{F}_{q^m} $), $ 0 \leq p \leq m-1 $ and $ 0 \leq \mu \leq n $, we have that
\begin{equation*}
d_{R,r}(\mathcal{C}_1, \mathcal{C}_2) = d_{M,rm - p}(M_{\boldsymbol\alpha}(\mathcal{C}_1), M_{\boldsymbol\alpha}(\mathcal{C}_2)),
\end{equation*}
\begin{equation*}
m K_{R,\mu}(\mathcal{C}_1, \mathcal{C}_2) = K_{M,\mu}(M_{\boldsymbol\alpha}(\mathcal{C}_1), M_{\boldsymbol\alpha}(\mathcal{C}_2)).
\end{equation*}
\end{corollary}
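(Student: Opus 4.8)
The plan is to transport everything through the matrix representation map $ M_{\boldsymbol\alpha} $, using Theorem \ref{theorem matrix modules are galois} to identify the two families of subspaces over which the relative weights and profiles are optimized. The two facts I would lean on are: first, that $ M_{\boldsymbol\alpha} : \mathbb{F}_{q^m}^n \longrightarrow \mathbb{F}_q^{m \times n} $ is an $ \mathbb{F}_q $-linear isomorphism, so it sends intersections to intersections and, applied to any $ \mathbb{F}_{q^m} $-linear subspace $ \mathcal{U} \subseteq \mathbb{F}_{q^m}^n $, yields $ \dim(M_{\boldsymbol\alpha}(\mathcal{U})) = m \dim(\mathcal{U}) $ (the left dimension over $ \mathbb{F}_q $, the right over $ \mathbb{F}_{q^m} $); and second, that by Theorem \ref{theorem matrix modules are galois} the assignment $ \mathcal{V} \mapsto M_{\boldsymbol\alpha}(\mathcal{V}) $ is a bijection from $ \Upsilon(\mathbb{F}_{q^m}^n) $ onto $ RS(\mathbb{F}^{m \times n}) $ carrying a Galois closed $ \mathcal{V} $ with $ M_{\boldsymbol\alpha}(\mathcal{V}) = \mathcal{V}_\mathcal{L} $ to a rank support space satisfying $ \dim(\mathcal{V}) = \dim(\mathcal{L}) $.

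The central computation I would carry out first is the intersection identity. Fix $ \mathcal{V} \in \Upsilon(\mathbb{F}_{q^m}^n) $ with $ M_{\boldsymbol\alpha}(\mathcal{V}) = \mathcal{V}_\mathcal{L} $. Since $ \mathcal{C}_i $ and $ \mathcal{V} $ are $ \mathbb{F}_{q^m} $-linear, so is $ \mathcal{C}_i \cap \mathcal{V} $, and because $ M_{\boldsymbol\alpha} $ is a bijection preserving intersections we get $ M_{\boldsymbol\alpha}(\mathcal{C}_i) \cap \mathcal{V}_\mathcal{L} = M_{\boldsymbol\alpha}(\mathcal{C}_i \cap \mathcal{V}) $. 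Taking $ \mathbb{F}_q $-dimensions and applying the scaling fact gives
$$ \dim(M_{\boldsymbol\alpha}(\mathcal{C}_i) \cap \mathcal{V}_\mathcal{L}) = m \dim(\mathcal{C}_i \cap \mathcal{V}), $$
whence the difference for $ i = 1 $ minus $ i = 2 $ equals $ m $ times the corresponding $ \mathbb{F}_{q^m} $-difference. This single identity, together with $ \dim(\mathcal{L}) = \dim(\mathcal{V}) $, is the engine for both equalities.

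For the profile equality, I would observe that as $ \mathcal{V} $ ranges over Galois closed spaces with $ \dim(\mathcal{V}) \leq \mu $, its image $ \mathcal{V}_\mathcal{L} $ ranges exactly over rank support spaces with $ \dim(\mathcal{L}) \leq \mu $ (the constraints coincide because $ \dim(\mathcal{L}) = \dim(\mathcal{V}) $), and along this correspondence the objective in the definition of $ K_{M,\mu} $ is $ m $ times the objective in the definition of $ K_{R,\mu} $. Taking maxima yields $ K_{M,\mu}(M_{\boldsymbol\alpha}(\mathcal{C}_1), M_{\boldsymbol\alpha}(\mathcal{C}_2)) = m K_{R,\mu}(\mathcal{C}_1, \mathcal{C}_2) $ at once.

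For the weight equality, the same correspondence reduces $ d_{M,rm-p}(M_{\boldsymbol\alpha}(\mathcal{C}_1), M_{\boldsymbol\alpha}(\mathcal{C}_2)) $ to a minimization of $ \dim(\mathcal{V}) $ over Galois closed $ \mathcal{V} $ subject to $ m(\dim(\mathcal{C}_1 \cap \mathcal{V}) - \dim(\mathcal{C}_2 \cap \mathcal{V})) \geq rm - p $. The step requiring care, which I expect to be the only real subtlety, is the offset $ p $: because the left-hand side is always a multiple of $ m $ and $ 0 \leq p \leq m-1 $ forces $ rm - p $ into the interval $ ((r-1)m, rm] $, the constraint is equivalent to the difference being $ \geq rm $, i.e.\ to the $ \mathbb{F}_{q^m} $-difference $ \dim(\mathcal{C}_1 \cap \mathcal{V}) - \dim(\mathcal{C}_2 \cap \mathcal{V}) $ being $ \geq r $. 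With that integrality argument settled, the minimized quantity $ \dim(\mathcal{L}) = \dim(\mathcal{V}) $ matches the definition of $ d_{R,r}(\mathcal{C}_1, \mathcal{C}_2) $ exactly, completing the proof.
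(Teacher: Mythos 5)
Your proposal is correct and follows the same route the paper intends: Corollary~\ref{corollary RGMW extend RGRW} is derived from Theorem~\ref{theorem matrix modules are galois} by transporting the defining optimizations through $ M_{\boldsymbol\alpha} $, using the bijection between $ \Upsilon(\mathbb{F}_{q^m}^n) $ and $ RS(\mathbb{F}_q^{m \times n}) $ together with the dimension identities $ \dim(\mathcal{V}) = \dim(\mathcal{L}) $ and $ \dim_{\mathbb{F}_q}(M_{\boldsymbol\alpha}(\mathcal{U})) = m \dim_{\mathbb{F}_{q^m}}(\mathcal{U}) $. Your explicit treatment of the offset $ p $ via the integrality of the $ \mathbb{F}_{q^m} $-dimension difference is exactly the detail that justifies the $ rm - p $ index in the statement, which the paper leaves implicit.
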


\subsection{RGMWs extend relative generalized Hamming weights} \label{subsec RGMW extend RGHW}

In this subsection, we show that relative generalized matrix weights also extend relative generalized Hamming weights \cite{luo}, and therefore generalized matrix weights extend generalized Hamming weights \cite{wei}. We start with the definitions of Hamming supports and Hamming support spaces:

\begin{definition} [\textbf{Hamming supports}]
Given a vector space $ \mathcal{C} \subseteq \mathbb{F}^n $, we define its Hamming support as
\begin{equation*}
\begin{split}
{\rm HSupp}(\mathcal{C}) = \{ & i \in \{ 1,2, \ldots, n \} \mid \\
 & \exists (c_1, c_2, \ldots, c_n) \in \mathcal{C}, c_i \neq 0 \}.
\end{split}
\end{equation*}
We also define the Hamming weight of the space $ \mathcal{C} $ as 
\begin{equation*}
{\rm wt_H}(\mathcal{C}) = \#{\rm HSupp}(\mathcal{C}).
\end{equation*}
\end{definition}

\begin{definition} [\textbf{Hamming support spaces}]
Given a subset $ I \subseteq \{ 1,2, \ldots, n \} $, we define its Hamming support space as the vector space in $ \mathbb{F}^n $ given by
\begin{equation*}
\mathcal{L}_I = \{ (c_1, c_2, \ldots, c_n) \in \mathbb{F}^n \mid c_i = 0, \forall i \notin I \}.
\end{equation*}
\end{definition}

We may now define relative generalized Hamming weights and relative dimension/lenght profile:

\begin{definition} [\textbf{Relative Dimension/Length Profile \cite{forney, luo}}]
Given nested linear codes $ \mathcal{C}_2 \subsetneqq \mathcal{C}_1 \subseteq \mathbb{F}^n $, and $ 0 \leq \mu \leq n $, we define their $ \mu $-th relative dimension/length profile (RDLP) as
\begin{equation*}
\begin{split}
K_{H,\mu}(\mathcal{C}_1, \mathcal{C}_2) = \max \{ & \dim(\mathcal{C}_1 \cap \mathcal{L}_I) - \dim(\mathcal{C}_2 \cap \mathcal{L}_I) \mid \\
 & I \subseteq \{ 1,2, \ldots, n \}, \#I \leq \mu \}.
\end{split}
\end{equation*}
\end{definition}

\begin{definition} [\textbf{Relative Generalized Hamming Weigths \cite[Section III]{luo}}]
Given nested linear codes $ \mathcal{C}_2 \subsetneqq \mathcal{C}_1 \subseteq \mathbb{F}^n $, and $ 1 \leq r \leq \ell = \dim(\mathcal{C}_1 / \mathcal{C}_2) $, we define their $ r $-th relative generalized Hamming weight (RGHW) as
\begin{equation*}
\begin{split}
d_{H,r}(\mathcal{C}_1, \mathcal{C}_2) = \min \{ & \# I \mid I \subseteq \{ 1,2, \ldots, n \} \\
 & \dim(\mathcal{C}_1 \cap \mathcal{L}_I) - \dim(\mathcal{C}_2 \cap \mathcal{L}_I) \geq r \}.
\end{split}
\end{equation*}
\end{definition}

We will now show how to see vectors in $ \mathbb{F}^n $ as matrices in $ \mathbb{F}^{n \times n} $. To that end, we introduce the diagonal matrix representation map $ \Delta : \mathbb{F}^n \longrightarrow \mathbb{F}^{n \times n} $ given by
\begin{equation}
\Delta (\mathbf{c}) = {\rm diag}(\mathbf{c}) = (c_i \delta_{i,j})_{1 \leq i \leq n, 1 \leq j \leq n},
\end{equation}
where $ \mathbf{c} = (c_1, c_2, \ldots, c_n) \in \mathbb{F}^n $ and $ \delta_{i,j} $ represents the Kronecker delta. In other words, $ \Delta (\mathbf{c}) $ is the diagonal matrix whose diagonal vector is $ \mathbf{c} $. 

Clearly $ \Delta $ is linear and one to one. Moreover, we have the following properties. 

\begin{proposition}
Let $ \mathcal{D} \subseteq \mathbb{F}^n $ be a vector space, and let $ I \subseteq \{ 1,2, \ldots, n \} $ be a set. Defining $ J = {\rm HSupp}(\mathcal{D}) \subseteq \{ 1,2, \ldots, n \} $, the following properties hold:
\begin{enumerate}
\item
$ {\rm RSupp}(\Delta(\mathcal{D})) = \mathcal{L}_J \subseteq \mathbb{F}^n $.
\item
For a rank support space $ \mathcal{V} \subseteq \mathbb{F}^{n \times n} $, if $ \Delta(\mathcal{D}) = \mathcal{V} \cap \Delta(\mathbb{F}^n) $, then $ \mathcal{D} = \mathcal{L}_J $. 
\item
$ {\rm wt_R}(\Delta(\mathcal{D})) = {\rm wt_H}(\mathcal{D}) $.
\end{enumerate}
\end{proposition}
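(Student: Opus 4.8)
The plan is to treat the three items in order, derive item 3 immediately from item 1, and isolate item 2 as the only part that genuinely uses the structure theorem for rank support spaces.

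First I would prove item 1 by a direct computation of row spaces. For a single vector $ \mathbf{c} = (c_1, c_2, \ldots, c_n) $, the $ i $-th row of $ \Delta(\mathbf{c}) $ is $ c_i \mathbf{e}_i $, where $ \mathbf{e}_i $ denotes the $ i $-th standard basis vector of $ \mathbb{F}^n $. Hence $ {\rm Row}(\Delta(\mathbf{c})) = \langle \mathbf{e}_i \mid c_i \neq 0 \rangle $. Summing over all $ \mathbf{c} \in \mathcal{D} $ and using that the union of the coordinate supports of the vectors in $ \mathcal{D} $ is exactly $ J = {\rm HSupp}(\mathcal{D}) $, I obtain $ {\rm RSupp}(\Delta(\mathcal{D})) = \sum_{\mathbf{c} \in \mathcal{D}} {\rm Row}(\Delta(\mathbf{c})) = \langle \mathbf{e}_i \mid i \in J \rangle = \mathcal{L}_J $. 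Item 3 then follows at once by taking dimensions: $ {\rm wt_R}(\Delta(\mathcal{D})) = \dim(\mathcal{L}_J) = \# J = {\rm wt_H}(\mathcal{D}) $, since $ \dim(\mathcal{L}_J) = \# J $ by construction of $ \mathcal{L}_J $.

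The main work is item 2, and here I would invoke Theorem \ref{theorem characterization} to write $ \mathcal{V} = \mathcal{V}_\mathcal{L} $ for some subspace $ \mathcal{L} \subseteq \mathbb{F}^n $. A diagonal matrix $ \Delta(\mathbf{c}) $ lies in $ \mathcal{V}_\mathcal{L} $ exactly when $ {\rm Row}(\Delta(\mathbf{c})) = \langle \mathbf{e}_i \mid c_i \neq 0 \rangle \subseteq \mathcal{L} $, that is, when $ \mathbf{e}_i \in \mathcal{L} $ for every $ i $ in the support of $ \mathbf{c} $. Setting $ I = \{ i \mid \mathbf{e}_i \in \mathcal{L} \} $, this condition reads $ c_i = 0 $ for all $ i \notin I $, i.e. $ \mathbf{c} \in \mathcal{L}_I $; thus $ \mathcal{V} \cap \Delta(\mathbb{F}^n) = \Delta(\mathcal{L}_I) $. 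Since $ \Delta $ is injective, the hypothesis $ \Delta(\mathcal{D}) = \mathcal{V} \cap \Delta(\mathbb{F}^n) $ forces $ \mathcal{D} = \mathcal{L}_I $, which is a Hamming support space with $ {\rm HSupp}(\mathcal{L}_I) = I $. Therefore $ J = {\rm HSupp}(\mathcal{D}) = I $ and $ \mathcal{D} = \mathcal{L}_I = \mathcal{L}_J $, as claimed.

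The only subtlety worth flagging is the direction of the inclusion. Without any hypothesis one has only $ \mathcal{D} \subseteq \mathcal{L}_{{\rm HSupp}(\mathcal{D})} $ in general, and equality can fail for an arbitrary subspace $ \mathcal{D} $. The point of intersecting a rank support space with the diagonal is precisely that it collapses $ \mathcal{D} $ onto a coordinate subspace, supplying the reverse inclusion and hence the equality $ \mathcal{D} = \mathcal{L}_J $; this is where the rank-support-space hypothesis on $ \mathcal{V} $ is indispensable.
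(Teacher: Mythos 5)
Your proof is correct, and it follows the natural route that the paper's own machinery dictates: a direct row-space computation for items 1 and 3, and the characterization of rank support spaces (Theorem \ref{theorem characterization}, writing $\mathcal{V} = \mathcal{V}_\mathcal{L}$) to show $\mathcal{V} \cap \Delta(\mathbb{F}^n)$ is the image of a Hamming support space for item 2; the paper defers its proof to the extended version, but there is essentially only one argument here and yours is it. The only cosmetic issue is that you recycle the letter $I$ (which the proposition's statement already reserves for an ambient, in fact unused, index set) for your set $\{ i \mid \mathbf{e}_i \in \mathcal{L} \}$; renaming it would avoid any confusion.
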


Therefore, the following result holds:

\begin{corollary} \label{corollary RGMW extend RGHW}
Given nested linear codes $ \mathcal{C}_2 \subsetneqq \mathcal{C}_1 \subseteq \mathbb{F}^n $, and integers $ 1 \leq r \leq \ell = \dim(\mathcal{C}_1 / \mathcal{C}_2) $ and $ 0 \leq \mu \leq n $, we have that
\begin{equation*}
d_{H,r}(\mathcal{C}_1, \mathcal{C}_2) = d_{M,r}(\Delta(\mathcal{C}_1), \Delta(\mathcal{C}_2)),
\end{equation*}
\begin{equation*}
K_{H,\mu}(\mathcal{C}_1, \mathcal{C}_2) = K_{M,\mu}(\Delta(\mathcal{C}_1), \Delta(\mathcal{C}_2)).
\end{equation*}
\end{corollary}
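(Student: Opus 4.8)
The plan is to transfer both equalities from the optimization problems defining the Hamming-metric quantities to those defining the matrix-metric quantities through the diagonal embedding $ \Delta $, exploiting that $ \Delta $ is injective and $ \mathbb{F} $-linear and that it matches Hamming support spaces with rank support spaces restricted to the diagonal. The whole argument rests on two correspondence facts. First, for a coordinate set $ I \subseteq \{ 1, \ldots, n \} $, I would check directly that $ \Delta(\mathbb{F}^n) \cap \mathcal{V}_{\mathcal{L}_I} = \Delta(\mathcal{L}_I) $: a diagonal matrix $ \Delta(\mathbf{c}) $ has row space spanned by the standard basis vectors $ \mathbf{e}_i $ with $ c_i \neq 0 $, so it lies in $ \mathcal{V}_{\mathcal{L}_I} $ exactly when $ \mathbf{c} \in \mathcal{L}_I $. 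Intersecting with $ \Delta(\mathcal{C}_j) $ and using injectivity of $ \Delta $ gives $ \Delta(\mathcal{C}_j) \cap \mathcal{V}_{\mathcal{L}_I} = \Delta(\mathcal{C}_j \cap \mathcal{L}_I) $, hence $ \dim(\Delta(\mathcal{C}_1) \cap \mathcal{V}_{\mathcal{L}_I}) - \dim(\Delta(\mathcal{C}_2) \cap \mathcal{V}_{\mathcal{L}_I}) = \dim(\mathcal{C}_1 \cap \mathcal{L}_I) - \dim(\mathcal{C}_2 \cap \mathcal{L}_I) $, together with $ \dim(\mathcal{L}_I) = \# I $.

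Second, for an arbitrary subspace $ \mathcal{L} \subseteq \mathbb{F}^n $, I would show that restricting to the diagonal collapses $ \mathcal{V}_\mathcal{L} $ to a coordinate support space. Setting $ \mathcal{D} = \Delta^{-1}(\Delta(\mathbb{F}^n) \cap \mathcal{V}_\mathcal{L}) $ and $ J = {\rm HSupp}(\mathcal{D}) $, item 2 of the preceding Proposition gives $ \mathcal{D} = \mathcal{L}_J $, i.e.\ $ \Delta(\mathbb{F}^n) \cap \mathcal{V}_\mathcal{L} = \Delta(\mathcal{L}_J) $. Intersecting with $ \Delta(\mathcal{C}_j) $ as before yields $ \Delta(\mathcal{C}_j) \cap \mathcal{V}_\mathcal{L} = \Delta(\mathcal{C}_j \cap \mathcal{L}_J) $. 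The essential quantitative point is that $ J = \{ i \mid \mathbf{e}_i \in \mathcal{L} \} $, so the $ \mathbf{e}_i $ indexed by $ J $ are linearly independent elements of $ \mathcal{L} $ and therefore $ \# J \leq \dim(\mathcal{L}) $.

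With these two facts both equalities follow by comparing the corresponding min/max problems. For $ K_{H,\mu}(\mathcal{C}_1,\mathcal{C}_2) = K_{M,\mu}(\Delta(\mathcal{C}_1), \Delta(\mathcal{C}_2)) $, every coordinate set $ I $ with $ \# I \leq \mu $ produces, via the first fact, a subspace $ \mathcal{L} = \mathcal{L}_I $ with $ \dim(\mathcal{L}) \leq \mu $ realizing the same dimension difference, giving $ K_{H,\mu} \leq K_{M,\mu}(\Delta(\mathcal{C}_1), \Delta(\mathcal{C}_2)) $; conversely every $ \mathcal{L} $ with $ \dim(\mathcal{L}) \leq \mu $ produces, via the second fact, a coordinate set $ J $ with $ \# J \leq \mu $ realizing the same difference, giving the reverse inequality. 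The identical pairing of the two facts, now comparing the minima defining $ d_{H,r} $ and $ d_{M,r} $ (the condition $ \dim(\mathcal{C}_1 \cap \, \cdot \,) - \dim(\mathcal{C}_2 \cap \, \cdot \,) \geq r $ being preserved in both directions), yields the weight equality; alternatively one can route $ d_{M,r}(\Delta(\mathcal{C}_1),\Delta(\mathcal{C}_2)) $ through the minimum-rank-weight description of Theorem \ref{theorem as minimum rank weights} and item 3 of the Proposition, which identifies $ {\rm wt_R}(\Delta(\mathcal{D})) $ with $ {\rm wt_H}(\mathcal{D}) $.

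I expect the main obstacle to be the second correspondence fact, and specifically the observation $ \# J \leq \dim(\mathcal{L}) $: the matrix-weight definitions optimize over all rank support spaces $ \mathcal{V}_\mathcal{L} $, a strictly richer family than the coordinate support spaces appearing in the Hamming definitions, so the real content is that no general $ \mathcal{L} $ can outperform the coordinate subspaces once everything is confined to the diagonal. Item 2 of the preceding Proposition supplies the structural half (the pullback to the diagonal is a coordinate support space), while the linear independence of $ \{ \mathbf{e}_i \mid i \in J \} $ supplies the budget half; together they close the potentially lossy direction. The remaining steps are routine bookkeeping with dimensions and the injectivity of $ \Delta $.
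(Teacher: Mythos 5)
Your proof is correct and follows essentially the same route as the paper: the paper derives this corollary directly from the preceding proposition on the diagonal embedding $\Delta$ (the conference version defers the details to the extended version), and your two correspondence facts --- $\Delta(\mathcal{C}_j) \cap \mathcal{V}_{\mathcal{L}_I} = \Delta(\mathcal{C}_j \cap \mathcal{L}_I)$ for coordinate sets, and the pullback of an arbitrary $\mathcal{V}_\mathcal{L}$ to $\mathcal{L}_J$ with $\#J \leq \dim(\mathcal{L})$ via item 2 --- are precisely the content that makes the corollary follow from that proposition. Your alternative route through Theorem~\ref{theorem as minimum rank weights} and item 3 is likewise the paper's intended mechanism for the weight equality, so there is nothing to add.
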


\subsection{GMWs improve Delsarte generalized weights} \label{subsec GMW improve DGW}

A notion of generalized weights, called Delsarte generalized weights, for a linear code $ \mathcal{C} \subseteq \mathbb{F}^{m \times n} $ has already been proposed in \cite{ravagnaniweights}. We will prove that generalized matrix weights are larger than or equal to Delsarte generalized weights for an arbitrary linear code, and we will prove that the inequality is strict for some linear codes. 

These weights are defined in terms of optimal anticodes for the rank metric:

\begin{definition} [\textbf{Maximum rank distance}]
For a linear code $ \mathcal{C} \subseteq \mathbb{F}^{m \times n} $, we define its maximum rank distance as 
$$ {\rm MaxRk}(\mathcal{C}) = \max \{ {\rm Rk (C)} \mid C \in \mathcal{C}, C \neq 0 \}. $$
\end{definition}

The following bound is given in \cite[Proposition 47]{ravagnani}:
\begin{equation} \label{anticode bound}
\dim(\mathcal{C}) \leq m {\rm MaxRk}(\mathcal{C}).
\end{equation}

This leads to the definition of rank-metric optimal anticodes:

\begin{definition}[\textbf{Optimal anticodes \cite[Definition 22]{ravagnaniweights}}]
We say that a linear code $ \mathcal{V} \subseteq \mathbb{F}^{m \times n} $ is a (rank-metric) optimal anticode if equality in (\ref{anticode bound}) holds. 

We will denote by $ A(\mathbb{F}^{m \times n}) $ the family of linear optimal anticodes in $ \mathbb{F}^{m \times n} $.
\end{definition}

In view of this, Delsarte generalized weights are defined in \cite{ravagnaniweights} as follows:

\begin{definition}[\textbf{Delsarte generalized weights \cite[Definition 23]{ravagnaniweights}}] \label{DGW}
For a linear code $ \mathcal{C} \subset \mathbb{F}^{m \times n} $ and an integer $ 1 \leq r \leq \dim(\mathcal{C}) $, we define its $ r $-th Delsarte generalized weight (DGW) as
\begin{equation*}
\begin{split}
d_{D,r}(\mathcal{C}) = m^{-1} \min \{ & \dim(\mathcal{V}) \mid \mathcal{V} \in A(\mathbb{F}^{m \times n}), \\
 & \dim(\mathcal{C} \cap \mathcal{V}) \geq r \}.
\end{split}
\end{equation*}
\end{definition}

We have that rank support spaces are optimal anticodes, which can be seen as \cite[Theorem 18]{ravagnaniweights} due to Theorem \ref{theorem matrix modules are galois}.

\begin{proposition} [\textbf{\cite[Theorem 18]{ravagnaniweights}}]
If a set $ \mathcal{V} \subseteq \mathbb{F}^{m \times n} $ is a rank support space, then it is a (rank-metric) optimal anticode. In other words, $ RS(\mathbb{F}^{m \times n}) \subseteq A(\mathbb{F}^{m \times n}) $.
\end{proposition}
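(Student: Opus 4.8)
The plan is to reduce the statement to a single computation---that the maximum rank of a rank support space equals the dimension of its defining subspace---and then close the argument by squeezing against the anticode bound (\ref{anticode bound}). Since $\mathcal{V}$ is a rank support space, by definition there is a subspace $\mathcal{L} \subseteq \mathbb{F}^n$ with $\mathcal{V} = \mathcal{V}_\mathcal{L}$, so I may work with $\mathcal{V}_\mathcal{L}$ throughout. From (\ref{dimension matrix modules}) in Theorem \ref{theorem characterization} I already have the dimension formula $\dim(\mathcal{V}_\mathcal{L}) = m \dim(\mathcal{L})$. By the definition of optimal anticode, $\mathcal{V}_\mathcal{L}$ lies in $A(\mathbb{F}^{m \times n})$ exactly when equality holds in (\ref{anticode bound}), i.e. when $\dim(\mathcal{V}_\mathcal{L}) = m\,{\rm MaxRk}(\mathcal{V}_\mathcal{L})$; combined with the dimension formula, the whole proof therefore amounts to showing ${\rm MaxRk}(\mathcal{V}_\mathcal{L}) = \dim(\mathcal{L})$.

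First I would record the easy inequality ${\rm MaxRk}(\mathcal{V}_\mathcal{L}) \le \dim(\mathcal{L})$: every $V \in \mathcal{V}_\mathcal{L}$ satisfies ${\rm Row}(V) \subseteq \mathcal{L}$ by the very definition of the rank support space, hence ${\rm Rk}(V) = \dim({\rm Row}(V)) \le \dim(\mathcal{L})$, and taking the maximum over nonzero $V$ gives the bound. Combining this with the dimension formula and the anticode bound (\ref{anticode bound}) applied to $\mathcal{C} = \mathcal{V}_\mathcal{L}$, I obtain the chain $m \dim(\mathcal{L}) = \dim(\mathcal{V}_\mathcal{L}) \le m\,{\rm MaxRk}(\mathcal{V}_\mathcal{L}) \le m \dim(\mathcal{L})$. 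All the inequalities are thus forced to be equalities, so in particular $\dim(\mathcal{V}_\mathcal{L}) = m\,{\rm MaxRk}(\mathcal{V}_\mathcal{L})$, which is precisely equality in (\ref{anticode bound}); hence $\mathcal{V}_\mathcal{L} \in A(\mathbb{F}^{m \times n})$, as desired.

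A more hands-on route would avoid invoking (\ref{anticode bound}) for the reverse inequality by directly exhibiting a matrix of rank $\dim(\mathcal{L})$ in $\mathcal{V}_\mathcal{L}$: placing a basis $\mathbf{b}_1, \ldots, \mathbf{b}_k$ of $\mathcal{L}$ (with $k = \dim(\mathcal{L})$) in the first $k$ rows and zeros elsewhere yields a matrix whose row space is $\mathcal{L}$ and whose rank is $k$, giving ${\rm MaxRk}(\mathcal{V}_\mathcal{L}) \ge \dim(\mathcal{L})$ and the same conclusion. I expect the only genuine content of the statement to be exactly this identification ${\rm MaxRk}(\mathcal{V}_\mathcal{L}) = \dim(\mathcal{L})$, with everything else being bookkeeping; the mild subtlety in the second route is that the explicit construction presupposes that a full basis of $\mathcal{L}$ fits into the $m$ available rows, a point the squeeze argument sidesteps by reading the reverse inequality straight off (\ref{anticode bound}). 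Alternatively, one could pass through Theorem \ref{theorem matrix modules are galois} to identify $\mathcal{V}_\mathcal{L}$ with the matrix image of a Galois closed space and quote \cite{ravagnaniweights}, but the self-contained computation above is shorter and works over an arbitrary field $\mathbb{F}$.
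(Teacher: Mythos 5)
Your argument is correct, but it takes a genuinely different route from the paper's. The paper gives no computation at all: it notes that rank support spaces are exactly the matrix representations of Galois closed spaces (Theorem \ref{theorem matrix modules are galois}) and then simply quotes \cite[Theorem 18]{ravagnaniweights} --- i.e., precisely the route you mention in your final sentence and set aside. Your squeeze is self-contained modulo the cited bound (\ref{anticode bound}): from $ \dim(\mathcal{V}_\mathcal{L}) = m \dim(\mathcal{L}) $ in (\ref{dimension matrix modules}) and the trivial estimate $ {\rm MaxRk}(\mathcal{V}_\mathcal{L}) \leq \dim(\mathcal{L}) $ (row spaces of elements of $ \mathcal{V}_\mathcal{L} $ lie in $ \mathcal{L} $), the chain $ m \dim(\mathcal{L}) = \dim(\mathcal{V}_\mathcal{L}) \leq m\, {\rm MaxRk}(\mathcal{V}_\mathcal{L}) \leq m \dim(\mathcal{L}) $ collapses to equalities, and equality in (\ref{anticode bound}) is precisely the paper's definition of optimal anticode. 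What your route buys is an elementary proof valid over an arbitrary field $ \mathbb{F} $, with no reference to the $ \mathbb{F}_{q^m}/\mathbb{F}_q $ machinery; what the paper's citation buys is brevity and an explicit bridge to \cite{ravagnaniweights}, which this subsection is comparing against anyway.

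One caveat. You claim the squeeze ``sidesteps'' the constraint, needed by your constructive variant, that a basis of $ \mathcal{L} $ fit into the $ m $ available rows. It does not sidestep it; it relocates it inside (\ref{anticode bound}). If $ n > m $ and $ \dim(\mathcal{L}) > m $, then $ {\rm MaxRk}(\mathcal{V}_\mathcal{L}) \leq m < \dim(\mathcal{L}) $, so $ \dim(\mathcal{V}_\mathcal{L}) = m \dim(\mathcal{L}) > m \, {\rm MaxRk}(\mathcal{V}_\mathcal{L}) $: the rank support space itself then violates (\ref{anticode bound}) as written, and the proposition fails under the paper's literal definition of optimal anticode. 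The explanation is that in \cite{ravagnani, ravagnaniweights} the normalizing factor in the anticode bound is the \emph{larger} of the two matrix dimensions, so (\ref{anticode bound}) and this whole subsection tacitly assume $ n \leq m $ (indeed, for $ m < n $ the optimal anticodes are column-support spaces rather than row-support spaces). Under that standing assumption your constructive variant is equally valid, since $ \dim(\mathcal{L}) \leq n \leq m $, and both of your routes, as well as the paper's, are sound; without it, no proof can succeed. So the two routes are not actually distinguished by this subtlety --- it is a hypothesis on $ (m,n) $, not a feature of the method.
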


Thus, the next consequence follows from the previous proposition and the corresponding definitions:

\begin{corollary}
For a linear code $ \mathcal{C} \subseteq \mathbb{F}^{m \times n} $ and an integer $ 1 \leq r \leq \dim(\mathcal{C}) $, we have that
$$ d_{D,r}(\mathcal{C}) \leq d_{M,r}(\mathcal{C}). $$
\end{corollary}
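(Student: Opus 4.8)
The plan is to rewrite both quantities as minima of $\dim(\mathcal{V})/m$ over families of anticodes subject to one and the same intersection constraint, and then invoke the inclusion $RS(\mathbb{F}^{m\times n}) \subseteq A(\mathbb{F}^{m\times n})$ from the preceding proposition to conclude that the minimum over the larger family (the Delsarte weight) cannot exceed the minimum over the smaller family (the matrix weight).

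First I would unfold the definition of $d_{M,r}(\mathcal{C}) = d_{M,r}(\mathcal{C}, \{0\})$. Since $\mathcal{C}_2 \cap \mathcal{V}_\mathcal{L} = \{0\}$, the RGMW reduces to $d_{M,r}(\mathcal{C}) = \min\{\dim(\mathcal{L}) \mid \mathcal{L} \subseteq \mathbb{F}^n,\ \dim(\mathcal{C} \cap \mathcal{V}_\mathcal{L}) \geq r\}$. Multiplying by $m$ and using the dimension identity (\ref{dimension matrix modules}), namely $\dim(\mathcal{V}_\mathcal{L}) = m\dim(\mathcal{L})$, I would rewrite this as $m\, d_{M,r}(\mathcal{C}) = \min\{\dim(\mathcal{V}_\mathcal{L}) \mid \mathcal{L} \subseteq \mathbb{F}^n,\ \dim(\mathcal{C}\cap \mathcal{V}_\mathcal{L}) \geq r\}$. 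As $\mathcal{L}$ ranges over all subspaces of $\mathbb{F}^n$, the spaces $\mathcal{V}_\mathcal{L}$ range exactly over the family $RS(\mathbb{F}^{m\times n})$ of rank support spaces, so $m\,d_{M,r}(\mathcal{C}) = \min\{\dim(\mathcal{V}) \mid \mathcal{V} \in RS(\mathbb{F}^{m\times n}),\ \dim(\mathcal{C}\cap\mathcal{V}) \geq r\}$.

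Next I would read off from Definition \ref{DGW} that $m\,d_{D,r}(\mathcal{C}) = \min\{\dim(\mathcal{V}) \mid \mathcal{V} \in A(\mathbb{F}^{m\times n}),\ \dim(\mathcal{C}\cap\mathcal{V}) \geq r\}$. Both minima are taken over anticodes $\mathcal{V}$ subject to the identical feasibility constraint $\dim(\mathcal{C}\cap\mathcal{V}) \geq r$; the only difference is that the Delsarte weight optimizes over all optimal anticodes whereas the matrix weight optimizes over the subfamily of rank support spaces. By the preceding proposition, $RS(\mathbb{F}^{m\times n}) \subseteq A(\mathbb{F}^{m\times n})$, so every feasible competitor for $m\,d_{M,r}(\mathcal{C})$ is also a feasible competitor for $m\,d_{D,r}(\mathcal{C})$. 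A minimum over a larger feasible set cannot exceed a minimum over a smaller one, giving $m\,d_{D,r}(\mathcal{C}) \leq m\,d_{M,r}(\mathcal{C})$, and dividing by $m$ yields the claim.

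There is essentially no serious obstacle here: the content is carried entirely by the inclusion $RS \subseteq A$, which is already established, and by the rescaling identity $\dim(\mathcal{V}_\mathcal{L}) = m\dim(\mathcal{L})$. The one point requiring minor care is confirming that the two feasibility families are parametrized compatibly — that letting $\mathcal{L}$ range over all subspaces of $\mathbb{F}^n$ produces precisely all of $RS(\mathbb{F}^{m\times n})$, which is immediate from the definition of rank support spaces together with Theorem \ref{theorem characterization}.
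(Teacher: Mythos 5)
Your proof is correct and is exactly the argument the paper intends: the paper derives the corollary ``from the previous proposition and the corresponding definitions,'' i.e., by rewriting $m\,d_{M,r}(\mathcal{C})$ via $\dim(\mathcal{V}_\mathcal{L}) = m\dim(\mathcal{L})$ as a minimum over rank support spaces and then using $RS(\mathbb{F}^{m \times n}) \subseteq A(\mathbb{F}^{m \times n})$ to compare it with the minimum over optimal anticodes defining $m\,d_{D,r}(\mathcal{C})$. You have simply spelled out the details that the paper leaves implicit, including the (correct) observation that the $\mathcal{V}_\mathcal{L}$ exhaust $RS(\mathbb{F}^{m \times n})$ as $\mathcal{L}$ ranges over subspaces of $\mathbb{F}^n$.
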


However, $ RS(\mathbb{F}^{m \times n}) \subsetneqq A(\mathbb{F}^{m \times n}) $ in general and, in some cases, generalized matrix weights are strictly larger than Delsarte generalized weights. Consider, for instance, $ m = n = 2 $ and the linear code
\begin{displaymath}
\mathcal{C} = \left\langle \left( \begin{array}{cc}
1 & 0 \\
0 & 0
\end{array} \right), \left( \begin{array}{cc}
0 & 1 \\
0 & 0
\end{array} \right) \right\rangle \subseteq \mathbb{F}^{2 \times 2}.
\end{displaymath}
It holds that $ \mathcal{C} $ is a linear optimal anticode, but it is not a rank support space. Moreover, $ d_{D,2}(\mathcal{C}) = 1 $ and $ d_{M,2}(\mathcal{C}) = 2 $.

\appendices

\section*{Acknowledgement}

The authors gratefully acknowledge the support from The Danish Council for Independent Research (Grant No. DFF-4002-00367) and from the Japan Society for the Promotion of Science (Grant No. 26289116). The first author is also thankful for the support and guidance of his advisors Olav Geil and Diego Ruano.

\ifCLASSOPTIONcaptionsoff
  \newpage
\fi



\bibliographystyle{IEEEtran}
\end{document}